\documentclass[journal]{IEEEtran}


\usepackage{graphicx}
\usepackage{mathptmx}
\usepackage{amsmath}
\usepackage{amssymb}
\usepackage{amsthm}

\newtheorem{definition}{Definition}

\newtheorem{lemma}{Lemma}

\usepackage{amsfonts}
\newcommand{\Z}{\mathbb{Z}}

\usepackage[ruled,norelsize]{algorithm2e}

\usepackage{rotating} 
\usepackage{slashbox}
\usepackage{multirow}
\usepackage{breqn}

{}
{}
\newtheorem{claim}{Claim}{}


\theoremstyle{definition}

\def\N{\mathbb N}
\def\Z{\mathbb Z}

\usepackage{booktabs}  

\usepackage[T1]{fontenc}
\usepackage{amsthm}

\newtheorem{theorem}{Theorem}

\usepackage{mathtools}
\usepackage{tikz}


\hyphenation{op-tical net-works semi-conduc-tor}
\DeclarePairedDelimiter\floor{\lfloor}{\rfloor}

\begin{document}

\title{AMOUN: Asymmetric lightweight cryptographic scheme for wireless group communication}

\author{Ahmad~Mansour,~
        Khalid~M.~Malik,
        ~and~Niko~Kaso
                 
\thanks{A. Mansour is with Dura Automotive Systems LLC, Auburn hills, MI, 48326 USA (e-mail: aamansour@oakland.edu).}
\thanks{K. M. Malik is with the Department of Computer Science and Engineering, Oakland University, Rochester, MI, 48309 USA (e-mail: mahmood@oakland.edu).}
\thanks{N. Kaso was with the Department of Mathematics and Statistics, Oakland University, Rochester, MI, 48309 USA  (email: kaso@oakland.edu).}
}

\markboth{}%
{Shell \MakeLowercase{\textit{et al.}}: Bare Demo of IEEEtran.cls for IEEE Journals}

\maketitle

\begin{abstract}

Multi-recipient cryptographic schemes provide secure communication, between one sender and multiple recipients, in a multi-party group. Providing secure multi-party communication is very challenging, especially in dynamic networks. Existing multi-recipient cryptographic schemes pose a variety of limitations. These include high computational overhead for both encryption and decryption, additional communication overhead and high setup cost due to change in membership, and collusion among recipients. In order to overcome these limitations, this paper introduces a novel asymmetric multi-recipient cryptographic scheme, $AMOUN$. In the proposed scheme, to better utilize network resources, the sender transmits a ciphertext containing different messages to multiple recipients, where each recipient only allowed to retrieve its own designated message. Security analysis demonstrates that the proposed scheme is indistinguishable under adaptive chosen plaintext attack. Quantitative analysis reveals that lightweight $AMOUN$ shows lower average computational cost than both RSA and Multi-RSA, for both encryption and decryption, even when the key sizes are four times larger. For a given prime size, in case of encryption, $AMOUN$ shows $98\%$ and $99\%$ lower average computational cost than RSA and Multi-RSA, respectively. For decryption, $AMOUN$ shows a performance improvement of $99\%$ compared to RSA and Multi-RSA.

\end{abstract}

\begin{IEEEkeywords}
Multi-recipient encryption, asymmetric cryptography, indistinguishability, chinese remainder theorem, chosen plaintext attack.
\end{IEEEkeywords}

\IEEEpeerreviewmaketitle

\section{Introduction}\label{sec:Intro}

Recently, multi-party communication is experiencing a significant surge in research interest due to the massive growth of its applications \cite{shi2014celltv,thomasson2015system, vijayakumar2016effective}. In the presence of adversaries and the possibility of illicit cooperation between two or more group members, also known as collusion, and in the absence of pre-established trust for group keys in highly dynamic networks, securing multi-party communication becomes very challenging. In multi-party communication, multi-party group, also known as a receiving group, consists of nodes/users having the same demands at a specific time and location \cite{mahmood2009autonomous}. The membership of group may continuously vary which is caused by the change in group members' preferences, characteristics of wireless media, and nodes' mobility \cite{mahmood2009autonomous,mansour2019amoun}. Multi-party secure communication (MSC), also known as secure one-to-many group communication, is used by an individual sender to share information with a group of recipients. In MSC, the single encrypted sent message could have the same or different contents for a set of recipients. 

Multi-recipient cryptography, a key enabler for MSC, is required to set up secure channels for the data exchange in different types of end-user applications \cite{shi2014celltv,thomasson2015system, vijayakumar2016effective}. It is also needed for securing key management including key establishment and key exchange between sender and receiving group in a highly dynamic networks, such as vehicular ad hoc networks (VANETs) and wireless sensor networks (WSNs) \cite{cheikhrouhou2016secure, vijayakumar2016dual}. Since standard encryption schemes do not allow to exploit batching for MSC, as each recipient receives a separate encrypted message; therefore, it results in a degradation in performance and inefficient use of bandwidth. Thus, there is a need to have multi-recipient cryptographic schemes, where all recipients in a group either require same or different contents from an individual sender in a single message \cite{mansour2019amoun,bellare2007multi}. Existing multi-recipient cryptographic schemes mainly focused on securing one-to-many communications without considering the computational and bandwidth constraints, which makes such schemes less practical for dynamic networks \cite{bellare2007multi,mansour2017multi}.

There has not been much research carried out on schemes aiming to share different contents within a single ciphertext, by a sender to multiple recipients. However, many applications can be benefited from such schemes. For example, in VANETs, Road Side Unit (RSU) can share multiple secret keys to member of multiple sub-groups in single ciphertext using multi-recipient cryptography. To accomplish this, RSU sends the ciphertext to cluster heads, and then each cluster head multicasts the secret key among its group members \cite{daeinabi2014advanced,gazdar2011secure}. Similarly, on-the-fly firmware can be updated to receiving nodes, in dynamic networks such as VANETs or WSNs \cite{hasrouny2017vanet,engoulou2014vanet}. In multicast publish/subscribe-based applications, the sending or receiving nodes could be a member of multiple groups/sub-groups \cite{mahmood2009autonomous,mahmood2009autonomous2,mahmood2008autonomous3}. For such applications, the sender can send the information to different groups/sub-groups in a single ciphertext, and each recipient can retrieve its corresponding group message by decrypting the ciphertext. Other futuristic applications include scenarios where multi-modal data of specific entity needs to be sent remotely to multiple entities located in same physical premises, and every receiving entity has different access control rights to get a portion of the multi-modal data. To fulfill the requirements of such applications, this paper proposes an efficient asymmetric multi-recipient cryptographic scheme, $AMOUN$, which is provably indistinguishable under adaptive chosen plaintext attack ($IND-CPA$). $AMOUN$ enables a sender to send different information to multiple recipients in a single attempt to save computational and bandwidth resources. This research focuses on the confidentiality of transmitted data between the sender and receiving group; therefore, we assume that there exists already a reliable authentication protocol \cite{hasrouny2017vanet,engoulou2014vanet,pathan2016security}.

Existing multi-recipient cryptographic solutions introduce, a) the computational and communication overhead due to change in group membership, b) the ciphertexts concatenation constraints, c) the need for key distribution, d) possibility of compromising group privacy, e) the need for group key, f) collusion among recipients, and g) the limitation of sending the same message for all recipients. To overcome the above-mentioned challenges, $AMOUN$ effectively leverages the mathematical formulations of Chinese Remainder Theorem (CRT) \cite{ding1996chinese}, prime factorization \cite{rivest1978method}, discrete logarithm \cite{diffie1976new}, and the use of noise parameter \cite{coron2011fully,gentry2009fully}. Furthermore, the proposed scheme avoids the requirement of additional computational and communication overhead when group membership of receiving group changes. It only requires receiving members to keep their private keys safe and share their public keys. Moreover, it neither requires a group key nor a sender to know the private keys of recipients to generate the public key. Also, similar to RSA, $AMOUN$ requires minimal collusion between the sender and receiving group and allows for plausible deniability \cite{mansour2017multi,bassous2015ambiguous}. Additionally, $AMOUN$ does not suffer from the key distribution problem due to its fully asymmetric nature. Therefore, $AMOUN$ is more practical because there is no need to either concatenate ciphertexts or to share secrets between members of receiving group. On top of that, there is a significant decrease in terms of computational overhead compared to other existing asymmetric schemes.

In \cite{mansour2019amoun}, we introduced the basic idea of $AMOUN$ for MSC. This paper extends the basic idea to a complete multi-recipient asymmetric cryptographic scheme. The major contributions of this paper are as follows: First, it presents $AMOUN$'s algorithms along with formal security proof to verify that the proposed scheme is indistinguishable under adaptive chosen plaintext attack ($IND-CPA$). Second, to prove $AMOUN$'s applicability, it presents a validation analysis by giving a mathematical proof. Third, it presents a detailed empirical analysis to prove $AMOUN$'s effectiveness in terms of its low computational and communication overhead.

The rest of this paper is structured as follows: Section \ref{sec:RelatedWork} reviews the related work and their shortcomings. Section \ref{sec:Preliminaries} presents the mathematical notation and definitions. Section \ref{subsec:Algorithms} discusses $AMOUN$'s algorithms. Section \ref{sec:ValidationAnalysis} explains the validation analysis of proposed scheme. Section \ref{sec:SecurityAnalysis} illustrates the security analysis of our solution and possible attacks. Section \ref{sec:TimeComplexity} provides the time complexity analysis. Section \ref{sec:Evaluation} shows the evaluation of proposed scheme. Section \ref{sec:Discussion} discusses the comparative and quantitative analysis of $AMOUN$ with other asymmetric cryptographic schemes. Lastly, section \ref{sec:Conclusion} concludes this paper.

\section{Related Work} \label{sec:RelatedWork}

This section reviews the recent state of the art one-to-many cryptographic schemes and provides a brief overview of their shortcomings. The related works of one-to-many cryptography can be divided into two main categories: symmetric cryptographic and asymmetric cryptographic schemes. Table \ref{tbl:RW} presents comparison of $AMOUN$ with existing one-to-many cryptographic schemes considering following factors: concatenation of ciphertexts, need for key distribution, communication and computational overhead between sender and recipients for change in group membership, computational overhead for encryption and decryption, additional recipient setup costs, collusion among recipients, need for group key, and threat of group privacy.


\begin{sidewaystable} 
	\centering
	\caption{Comparison of AMOUN with existing one-to-many cryptographic schemes}
	\label{tbl:RW}
	
	\begin{tabular}{p{9.2cm}lllllllll}
		\hline
		\backslashbox[5.45cm]{\textbf{Factors}}{\textbf{Scheme}}& 
		\textbf{\cite{bassous2015ambiguous}} & 
		\textbf{\cite{canetti1999efficient}} &
		\textbf{\cite{perrig2005tesla}}&
		\textbf{\cite{dodis2002public}}&
		\textbf{\cite{nguyen2005rsa}}& 
		\textbf{\cite{rivest1978method}}&
		\textbf{\cite{mansour2017multi}}&	
		\textbf{$AMOUN$} \\ \hline
		\textbf{Type}                                                             &Symme &Symme &Symme &Asymm&Asymm&Asymm&Asymm&Asymm\\ 
		\textbf{Concatenation of ciphertexts}                                     &No &No &No  &No &No &Yes&No&No  \\ 
		\textbf{Need for key distribution}                                        &Yes&Yes&Yes&No &No &No &No&No  \\ 
		\textbf{Computational overhead for encryption and decryption}             &Low &Low &Low &High &High&High &High &Low  \\ 
		\textbf{Communication overhead for change in group membership}            &Yes&Yes&Yes&Yes&Yes&Yes&No&No  \\ 
		\textbf{Setup cost due to change in membership}                           &No &Yes&Yes&Yes&Yes&No &No&No  \\ 
		\textbf{Collusion among recipients}                                        &No &Yes&No  &Yes&Yes&No &No&No  \\ 
		\textbf{Need for group key}                                               &No &Yes&No  &Yes&No &No &No&No  \\ 
		\textbf{Threat of group privacy}                                          &No &Yes&No  &Yes&Yes&No &No&No  \\  \hline                          
		
	\end{tabular}	
\end{sidewaystable} 


\subsection{Symmetric Cryptographic Schemes} 

Symmetric cryptographic schemes for one-to-many systems suffer from many disadvantages. In \cite{micciancio2006corrupting}, Micciancio et al. provide a competitive analysis of different types of one-to-many cryptographic schemes and group key distribution schemes. They explain that such algorithms encounter the group privacy problem, where a subset of users needs to collude with each other to combine their secret information to decrypt the transmission of the sender. This study shows that all of the studied one-to-many cryptographic protocols require group key distribution. Unlike these symmetric schemes, $AMOUN$ doesn't have any such requirement, as parties only need to publish their public keys to a common source for the sender to access them. Another effort using a symmetric scheme can be found in \cite{canetti1999efficient}. The authors discussed minimal storage and minimal communication based cryptographic scheme. In the minimal storage method, every recipient carries private and group keys that they use when a change in the recipients' group takes place. On the other hand, their minimal communication method uses a tree structure with the sender at the root and recipients at leaf nodes. Every recipient knows the keys of all nodes between the root and themselves. A change in the receiving group trickles down to all the nodes on the path between the root and the modified node. Engaging other parties in this communication setup and teardown eventually creates privacy concerns about the shared information between the nodes on the same path in the tree. In contrast to \cite{canetti1999efficient}, $AMOUN$ does not require additional communication and computational costs at recipients side when a change in receiving group occurs. Moreover, our scheme is a native asymmetric implementation where all recipients are independent with their public and private key pair, and no centralized key distribution is needed.

Ambiguous Multi-Symmetric Cryptography (AMSC) is another one-to-many symmetric cryptographic system based on CRT \cite{bassous2015ambiguous}. It narrates that the knowledge of all moduli in CRT is fundamental to get all messages from the ciphertext. While AMSC has been proven to be faster than many other symmetric cryptographic schemes, it requires collusion between sender and recipients for exchanging keys \cite{bassous2015ambiguous, schneier2007applied}. Also, it faces the key distribution problems, one of the fundamental challenges with symmetric cryptography \cite{rivest1978method}. These can be solved using asymmetric key distribution protocols. Managing symmetric algorithms requires resources not needed in an asymmetric implementation. Considering a system with limited resources, key distribution can become impossible or extremely difficult as the number of participants increases. In addition, such an expansion leads to an exponential increase in memory requirements for key storage, which is impossible in edge-connected devices. Due to its asymmetric nature, $AMOUN$ does not require an additional layer of key distribution prior to the secure communication.

Few schemes were proposed that integrate authentication and confidentiality using symmetric one-to-many cryptographic techniques \cite{perrig2001efficient,zhao2012survey}. For example, a well-known approach in this regard\cite{perrig2005tesla} proposes one-way hash chains and loosely synchronization protocols. However, this scheme has many limitations. First, computational costs and communication delays are introduced by the time needed to authenticate the messages. This impact is further magnified due to the need for a periodic key regeneration, which is an additional overhead, particularly in the resource-constrained networks. Also, this scheme is not scalable and suffers from key distribution issues when it assumes that one-way hash chains are initially sent via a secure channel. These drawbacks are solved in our schemes, where there is no need for the periodic key regeneration. In addition, if there is a change in the receiving group, $AMOUN$ requires no communication overhead. This makes $AMOUN$ ideal for systems, such as VANETs, where there exists little or no trust among individual recipients. Lastly, as $AMOUN$ solely relies on the key generation by the system's users, it involves reduced setup cost compared to existing symmetric one-to-many cryptographic solutions.

\subsection{Asymmetric Cryptographic Schemes} 

To solve the above-mentioned challenges of symmetric one-to-many cryptographic schemes, Dodis et al. \cite{dodis2002public}, adopted the minimal communication based cryptographic scheme defined in \cite{canetti1999efficient} using asymmetric cryptographic. However, this solution introduces a huge communication overhead for revoking or adding recipients. In $AMOUN$, the sender simply changes receiving group members without ever having to notify other receiving group members. Additionally, with $AMOUN$, there is much less setup cost, since we do not have to construct whole tree structure and generate keys ahead of time, which adds an extra computational overhead as it uses asymmetric cryptographic.

On the other hand, RSA is a standard cryptographic scheme used today, created by Ron Rivest, Adi Shamir, and Leonard Adleman \cite{rivest1978method}. RSA and $AMOUN$ have a similar mathematical base as they both heavily rely on properties of modulus \cite{ding1996chinese} and prime factorization problem \cite{rivest1978method}. Security of RSA depends on the difficulty of factoring two large prime numbers and the property of two modular inverses in a Euler's phi space of a number. These two characteristics make the RSA scheme extremely hard to break, for large enough key size, without knowing the private key \cite{rivest1978method}. In normal asymmetric cryptographic systems, including RSA, a one-to-many cryptographic system could be implemented by concatenating ciphertexts together and sending the resultant message. These systems face several challenges. First, both the sender and recipient have to agree on a ciphertext offset in the concatenation. To accomplish this, the sender and recipient will need to collude before the start of the multi-party communication. This violates the principles of asymmetric cryptography that abandon the need for the parties to communicate prior for sending the information. Also, manipulating the offsets for every message introduces the overhead of an extra communication per message, while a constant offset requires padding, for example addition of zeros, to account for empty spaces leading to an inefficient larger ciphertext size. Another limitation is that ciphertexts are only concatenated but not mixed. This enables attackers to target specific ciphertext or listen to only parts of the entire concatenation to find specific messages. In addition, if the attacker knows the size of ciphertexts, he can determine the number of messages exchanged. Hence, we can conclude that such concatenation compromises the security of a message exchange \cite{mansour2017multi}.

Since the creation of RSA, many researchers have proposed new derivative cryptographic schemes using it as a base. One of the most known modifications is CRT-RSA \cite{quisquater1982fast}. This scheme uses CRT to split up the decryption key into two pieces to make the exponentiation faster. Although, CRT-RSA uses CRT, however, it is not designed for secure group communication \cite{quisquater1982fast}. Shared RSA, another extension of RSA, can be used for secure communication from one-to-many, or many-to-many \cite{nguyen2005rsa}. The major problem with Shared RSA is that it requires collusion among recipients in order to work successfully. This means that each recipient must be individually trusted. More importantly, in this scheme, all recipients need to share a common secret among themselves. This not only adds additional recipient initial setup cost, but also introduces system communication overhead in case of any change in membership of receiving group. This overhead further increases, as the number of the recipients increases. In our scheme, we solved this issue by only using the public keys without the need for membership re-configuration process, which introduces huge computational and communication overhead.

In \cite{mansour2017multi}, authors proposed a one-to-many asymmetric cryptographic scheme, Multi-RSA, which is based on RSA and CRT to address the limitations of existing solutions for one-to-many asymmetric RSA schemes. In this paper, we compare our scheme with Multi-RSA since, like RSA, it relies on properties of modulus and prime factorization problem \cite{ding1996chinese,rivest1978method}. Although, Multi-RSA has the same communication overhead as $AMOUN$; however, it suffers heavily from using RSA cryptographic scheme as a base, since RSA scheme is known to have high computational cost for both encryption and decryption, which makes it hard to implement Multi-RSA in dynamic mobile network environments and makes this scheme less scalable. $AMOUN$ achieves all of the benefits of Multi-RSA with significant improvement in terms of time required for both encryption and decryption by introducing a lightweight one-to-many asymmetric cryptographic scheme.

\section{Preliminaries} \label{sec:Preliminaries}

This section introduces the notation used by $AMOUN$'s algorithms and provides formal definitions of the proposed scheme. 

\subsection{Mathematical Notation}

Let $\N=\{1,2,3,\ldots,\}$ be the set of of natural numbers and $\Z=\{\ldots,-3,-2,-1,0,1,2,3,\ldots\}$ be the set of integer numbers. We write $a \equiv b~(\bmod~n)$ if $a$ and $b$ are two integers that have the same remainder when divided by $n \in \N$. For each $n \in \N$, $\Z_n=\{\overline0, \overline1, \overline2,\ldots,\overline{n-1}\}$ is the ring of integers modulo $n$. Elements of $\Z_n$ are called classes of the set of integers modulo $n$, and each class contains all integers that have the same residue modulo $n$. For instance, $a \equiv b~(\bmod~n)$, if and only if $a$ and $b$ appear in the same class of integers modulo $n$, e.g. $\bar{0}=\{0, n, 2\cdot n, \ldots, k\cdot n, \ldots\}$. The length of a string $Y$ is denoted by $|Y|$, whereas the set of binary strings of finite length is denoted by $\{0,1\}^*$. Also, $1^\alpha$ refers to the string $\overbrace{11\ldots 1}^\alpha$. For future reference in the section, we denote $F_1~\xleftarrow{\text{\$}}~F$ if $F_1$ is a set taken uniformly at random from $F$. In the case when there are several sets $F_1, F_2, \ldots, F_v$ obtained uniformly at random from the set $F$, we will use the abbreviation $F_1, F_2, \ldots, F_v ~\xleftarrow{\text{\$}}~F$. For every $v \in \N$ we also use $F~\xleftarrow{\text{\$}}~A(F_1, F_2, \ldots, F_v)$ to mean that $F$ is the output of the randomized algorithm $A$ on inputs $F_1, F_2, \ldots, F_v$. Notation $X~\xleftarrow{\text{\$}}~Y$ is used for the operation of assigning the value $Y$ to $X$.

\subsection{Definitions and Theorems} \label{sec:Def}
Below, we give some definitions and theorems for the $AMOUN$ cryptosystem.

\begin{lemma}
	Let $a, b \in \Z$ and $n \in \N$. We say that
	$a \equiv b~(\bmod~n)$ ($a$ is congruent to $b$ modulo $n$) if and only if $n|(a-b)$ \cite{lindell2014introduction}.
	
\end{lemma}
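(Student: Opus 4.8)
The plan is to unwind both sides to the definition of congruence given in the paper, namely that $a \equiv b \pmod n$ exactly when $a$ and $b$ leave the same remainder upon division by $n$, and then to convert that statement about remainders into the divisibility statement $n \mid (a-b)$ using the division algorithm. So the first step is to invoke the division algorithm to write $a = q_1 n + r_1$ and $b = q_2 n + r_2$ with $q_1, q_2 \in \Z$ and $0 \le r_1, r_2 < n$; by definition $a \equiv b \pmod n$ is equivalent to $r_1 = r_2$.

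For the forward implication, I would assume $r_1 = r_2$ and simply subtract the two equations: $a - b = (q_1 - q_2) n$, and since $q_1 - q_2 \in \Z$ this exhibits $n \mid (a-b)$ directly.

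For the converse, I would assume $n \mid (a-b)$, so $a - b = k n$ for some $k \in \Z$. Subtracting the two division-algorithm equations gives $r_1 - r_2 = (a - b) - (q_1 - q_2) n = \bigl(k - (q_1 - q_2)\bigr) n$, so $n \mid (r_1 - r_2)$. The one point that needs care is the range argument: from $0 \le r_1 < n$ and $0 \le r_2 < n$ we get $-n < r_1 - r_2 < n$, and the only integer multiple of $n$ strictly between $-n$ and $n$ is $0$; hence $r_1 = r_2$, i.e. $a \equiv b \pmod n$. That range/uniqueness step — arguing that $0$ is the unique multiple of $n$ in the open interval $(-n,n)$ — is the only place where anything beyond pure algebra is used, so I would flag it as the ``main obstacle,'' though it is entirely routine. (Alternatively, one can cite the uniqueness clause of the division algorithm instead of arguing the bound by hand.) This completes both directions and hence the equivalence.
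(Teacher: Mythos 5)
Your proof is correct and complete: the paper states this lemma without proof (it is simply cited to a textbook), and your argument via the division algorithm is the standard one, correctly grounded in the paper's own definition of congruence as equality of remainders. The range argument $-n < r_1 - r_2 < n$ forcing $r_1 = r_2$ is exactly the right point to flag, and it is handled properly.
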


\begin{theorem}
	\textit{For every $a, b, c, d \in \Z$ and every $n \in \N$ the following statements hold true:}
	
	\begin{itemize}
		\item $a \equiv a~(\bmod~n)$,
		\item $a \equiv b~(\bmod~n)$ implies $b \equiv a~(\bmod~n)$,
		\item $a \equiv b~(\bmod~n)$ and $b \equiv c~(\bmod~n)$ implies $a \equiv c~(\bmod~n)$,
		\item $a \equiv b~(\bmod~n)$ and $c \equiv d~(\bmod~n)$ implies $(a \pm c) \equiv (b \pm d)~(\bmod~ n)$,
		\item $a \equiv b~(\bmod~n)$ and $c \equiv d~(\bmod~n)$ implies $a \cdot c \equiv b\cdot d~(\bmod~n)$,
		\item $a \equiv b~(\bmod~n)$ implies $a^i \equiv b^i~(\bmod~n)$, for every $i \in \N$.\\
	\end{itemize}
\end{theorem}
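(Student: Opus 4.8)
The plan is to reduce every assertion to a statement about divisibility in $\Z$ and then apply the preceding lemma, which tells us that $a \equiv b \pmod{n}$ holds exactly when $n \mid (a-b)$. Once each congruence is rewritten in this form, the six items become elementary facts about the divisibility relation, so the argument is essentially a short sequence of algebraic identities together with the standard observation that a common divisor of two integers divides any integer linear combination of them.

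First I would dispatch reflexivity, symmetry, and transitivity. Reflexivity is immediate, since $a - a = 0$ and $n \mid 0$. For symmetry, from $n \mid (a-b)$ we obtain $n \mid -(a-b) = b-a$. For transitivity, if $n \mid (a-b)$ and $n \mid (b-c)$, then $n$ divides their sum $(a-b) + (b-c) = a-c$, hence $a \equiv c \pmod{n}$.

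Next I would treat the two arithmetic-compatibility items. For addition and subtraction, I would use the identity $(a \pm c) - (b \pm d) = (a-b) \pm (c-d)$: since $n$ divides both $a-b$ and $c-d$ by hypothesis, it divides this combination. For multiplication, the key identity is $ac - bd = c(a-b) + b(c-d)$, from which $n \mid (ac - bd)$ follows because $n$ divides each summand; translating back gives $ac \equiv bd \pmod{n}$.

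Finally, the sixth item, $a \equiv b \pmod{n} \Rightarrow a^{i} \equiv b^{i} \pmod{n}$ for every $i \in \N$, is the only step that is not a one-line computation, and I would prove it by induction on $i$. The base case $i = 1$ is the hypothesis itself; for the inductive step, assuming $a^{i} \equiv b^{i} \pmod{n}$, I would apply the multiplication item just established to the pair of congruences $a^{i} \equiv b^{i} \pmod{n}$ and $a \equiv b \pmod{n}$ to conclude $a^{i+1} \equiv b^{i+1} \pmod{n}$. I do not anticipate a genuine obstacle anywhere; the only point requiring mild care is to invoke the already-proven multiplication property in the induction rather than re-deriving it from scratch, so that the argument stays clean.
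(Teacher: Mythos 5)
Your proof is correct and complete: reducing each item to the divisibility criterion $n\mid(a-b)$ from the preceding lemma, verifying the algebraic identities (in particular $ac-bd=c(a-b)+b(c-d)$), and handling the power rule by induction on $i$ via the multiplication rule is exactly the standard argument. The paper itself states this theorem without proof, treating it as background material, so there is no authorial proof to compare against; your write-up would serve as a valid proof of the statement as given.
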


\begin{definition}
	Let $a, b \in \Z$ be two integers. An integer $d$ is the greatest common divisor of $a$, $b$ if it is a multiple of every common divisor of $a$ and $b$ i.e., if $d_1|a, d_1|b$, then $d_1|d$.
\end{definition}

\begin{theorem}
	
	\textit{Let $n_1, n_2, \ldots, n_z$ be $z$ pairwise relatively prime numbers and $a_1, a_2, \ldots, a_z \in \Z$. Then, the system of equations \label{CRTTT}}

	\[
	\left\{
	\begin{array}{ll}
	x \equiv a_1 &(\bmod~n_1) \\
	x \equiv a_2 &(\bmod~n_2) \\
	&\vdots\\
	x \equiv a_i &(\bmod~n_i)\\
	&\vdots\\
	x \equiv a_z &(\bmod~n_z)\\
	\end{array}
	\right.
	\]
	
\end{theorem}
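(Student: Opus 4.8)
The final statement is (the existence/uniqueness conclusion of) the Chinese Remainder Theorem: the displayed system has a solution $x \in \Z$, and that solution is unique modulo $N := n_1 n_2 \cdots n_z$. The plan is to split the argument into an \emph{existence} part, carried out by an explicit construction, and a \emph{uniqueness} part, which exploits the pairwise coprimality of the moduli.

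For existence, I would first set $N_i := N / n_i = \prod_{j \neq i} n_j$ for each $i \in \{1, \ldots, z\}$. The key preliminary observation is that $\gcd(N_i, n_i) = 1$: since the $n_j$ are pairwise relatively prime, no nontrivial common divisor of $n_i$ divides any $n_j$ with $j \neq i$, hence none divides the product $N_i$. By B\'ezout's identity there then exist integers $M_i, K_i$ with $N_i M_i + n_i K_i = 1$, that is, $N_i M_i \equiv 1 \pmod{n_i}$. I would then define $x := \sum_{i=1}^{z} a_i N_i M_i$ and check, using the congruence-arithmetic rules from Theorem~2 of the excerpt, that $x \equiv a_i \pmod{n_i}$ for every $i$. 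Indeed $n_i \mid N_j$ whenever $j \neq i$, so every term of the sum except the $i$-th vanishes modulo $n_i$, while the $i$-th term satisfies $a_i N_i M_i \equiv a_i \cdot 1 = a_i \pmod{n_i}$.

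For uniqueness, suppose $x$ and $x'$ both satisfy the system. Then $x - x' \equiv 0 \pmod{n_i}$, i.e.\ $n_i \mid (x - x')$, for each $i$. Because the $n_i$ are pairwise coprime, their product $N$ divides $x - x'$ as well; I would establish this by induction on the number of moduli, using the elementary fact that if $n_1 \cdots n_k \mid m$ and $n_{k+1} \mid m$ with $\gcd(n_1 \cdots n_k,\, n_{k+1}) = 1$, then $n_1 \cdots n_{k+1} \mid m$, which follows from the gcd/divisibility properties underlying Definition~2. Consequently $x \equiv x' \pmod{N}$, so the solution produced above is the unique one modulo $N$.

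The main obstacle is not any single computation but assembling the supporting number-theoretic facts cleanly, namely: (i) that pairwise coprimality of the $n_j$ forces $\gcd(N_i, n_i) = 1$, and (ii) that an integer divisible by each of several pairwise-coprime moduli is divisible by their product. Both ultimately rest on the existence of B\'ezout coefficients (equivalently, on the structure of the gcd); once these two lemmas are in place, the construction of $x$ and the verification of the congruences are routine applications of the rules already recorded in the excerpt.
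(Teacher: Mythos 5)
Your proof is correct: it is the standard constructive proof of the Chinese Remainder Theorem, establishing existence via the B\'ezout construction $x=\sum_{i} a_i N_i M_i$ with $N_i M_i\equiv 1\pmod{n_i}$, and uniqueness from the fact that an integer divisible by pairwise coprime moduli is divisible by their product. There is nothing to compare it against in the paper itself, which states this theorem as background and attributes it to the literature without proof. It is worth noting, though, that your existence construction is exactly the one the scheme later instantiates: the quantities $A_i\cdot\frac{X}{N_i}$ with $A_i\cdot\frac{X}{N_i}\equiv 1\pmod{N_i}$ computed in the initialization phase play the role of your $N_i M_i$, and the ciphertext $C=\bigl(\sum_i m_i S_i\bigr)\bmod X$ is precisely your $x$ for the system of residues $m_i e'_i$ modulo the $N_i$, so your argument doubles as a justification for why the paper's encryption formula is well defined.
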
	
has a unique solution modulo
$n=n_1 \cdot n_2 \cdot \ldots \cdot n_z$ \cite{ding1996chinese}.\\\\

\begin{definition}
	A function
	$f: \N~\xrightarrow~[0, 1]$ is called \textsl{negligible}
	if $$\lim_{n\to\infty} f(n)=0$$ and
	$$\lim_{n\to\infty} \frac{f(n)}{\frac{1}{p(n)}}=0$$\\ for every
	polynomial $p:\N~\xrightarrow~\N$ and all $n \ge n_p$, where $n_p$ is
	some natural number.\end{definition}

We now continue by describing the syntax of an encryption scheme. In this research, we define our multi-recipient asymmetric cryptographic scheme, $AMOUN$, by following the standard definitions of asymmetric cryptographic schemes from \cite{bellare2007multi,bellare2000public,cramer1998practical,cramer2003design}. Moreover, we are extending the standard definitions in order to add an $initialization$ algorithm. Therefore, our multi-recipient asymmetric cryptographic scheme $AMOUN = (KG, I, E, D)$ consists of four algorithms:\\

\begin{itemize}
	\item The $key~generation$ algorithm, $KG$, is a probabilistic algorithm that takes in $1^\alpha$ as an input, where $\alpha\in N$ is a security parameter, and returns public and private keys, $K_i^+$ and $K_i^-$, respectively.\\
	
	\item The $initialization$ algorithm, $I$, is a probabilistic algorithm that takes in a list of public keys $K^+_1, ..., K^+_n$ to produce a list of initialization parameters $L =\{\{N'_1, AX_1\}\}, \ldots, \{N'_n, AX_n\}\}$, which is needed for encryption.\\

	\item The $encryption$ algorithm, $E$, is a probabilistic algorithm that takes in the list of parameters $L$, a list of public key elements $e_1, ..., e_n$, a list of coins $r=\{r_1, ..., r_n\}$, and a message vector $M =\{m_1, \ldots, m_n\}$, to produce the ciphertext $C$, where $r$ is provided from the sender to randomize the ciphertext $C$.\\

	\item The $decryption$ algorithm, $D$, is a deterministic algorithm that takes in the private key $K_i^-$ and the ciphertext $C$ to produce either the message $m_i\in \{0, 1\}^*$ or a special symbol $\perp$ to indicate that the ciphertext $C$ was invalid.\\
	
\end{itemize}

Each message $m_i$ in the vector $M$ has been created from a $message~space~MsgSp(v)$, where $1<v<1^\alpha$. Moreover, it is important to clarify that the notation $C~\xleftarrow{\text{\$}}~E_{L}(M)$ is the shorthand for $r~\xleftarrow{\text{\$}}~Coin_{E}(1^\alpha)$; $C~\xleftarrow{\text{\$}}~E_{L}(M, r)$, where $Coin_{E}(1^\alpha)$ generates the random coins list $r$ for the encryption algorithm $E$. Based on the above mentioned discussion, the following experiment is required to return $1$ with probability $1$:

\begin{table}[!h]
	
	\centering
	\label{tbl:Exp}
	
	\begin{tabular}{ll}
		&For $i=1, ..., n$ do ($K^+_i, K^-_i$)$~\xleftarrow{\text{\$}}~KG(1^\alpha)$ EndFor; \\ 
		&For $i=1, ..., n$ do $L_i~\xleftarrow{\text{\$}}~I(K^+_i)$ EndFor; \\ 
		&$M~\xleftarrow{\text{\$}}~MsgSp(v)$; $C~\xleftarrow{\text{\$}}~E_{L}(M)$; \\ 
		&$j~\xleftarrow{\text{\$}}~\{1, ..., n\};$ If $D_{K_j^-}(C)=m_i$ then return $1$ else return $0$;         \\
	\end{tabular}
\end{table}

In this research, we will prove that the proposed multi-recipient asymmetric cryptographic scheme, $AMOUN$, is secure against adaptive chosen plaintext attack. We identify a concrete-security version of the standard notion of security of multi-recipient asymmetric cryptographic schemes in the sense of indistinguishability as in \cite{bellare2007multi,bellare2000public}.

\theoremstyle{definition}
\begin{definition} $\textbf{[IND-CPA]}$
	Let $AMOUN = (KG, I, E, D)$ be a multi-recipient asymmetric cryptographic scheme. Let $\mathcal{A}_{cpa}$ be an adversary which runs in two stages, $find$ and $guess$, and has access to an encryption oracle. For $b \in \{0, 1\}$, lets define the following experiment:
\end{definition}

\begin{table}[!h]
	
	\centering
	\label{tbl:ExpCPA}
	
	\begin{tabular}{ll}
		\multicolumn{2}{l}{\textbf{Experiment Exp}$_{AMOUN,~\mathcal{A}_{cpa}}^{cpa-b}$($\alpha$)} \\ 
		&For $i=1, ..., n$ do ($K^+_i, K^-_i$)$~\xleftarrow{\text{\$}}~KG(1^\alpha)$ EndFor; \\ 
		&($M_0, M_1, st$)$~\xleftarrow{\text{\$}}~\mathcal{A}_{cpa}(find, K_1^+, ..., K_n^+)$; \\ 
		&For $i=1, ..., n$ do $L_i~\xleftarrow{\text{\$}}~I(K^+_i)$ EndFor; \\ 
		&$C~\xleftarrow{\text{\$}}~E_{L}(M_b)$; \\ 
		&$d ~\xleftarrow{\text{\$}}~\mathcal{A}_{cpa}(guess, C, st)$; \\ 
		&Return $d$;      \\ 
	\end{tabular}
\end{table}

In the experiment above, the adversary $\mathcal{A}$ performs two stages. In the first stage, $find$, the adversary $\mathcal{A}$ takes in $K^+_1, ..., K^+_n$ and returns two message vectors $M_0$ and $M_1$ of size $n$, where $|M_0|=|M_1|$, which means that the messages in both message vectors have the same size, and $M_0, M_1 \in MsgSp(v)$. In addition to that, the first stage returns the parameter $st$, which indicates the state information that the adversary wants to maintain. In the second stage, $guess$, the encryption oracle provides the adversary $\mathcal{A}$ a challenge ciphertext $C$ which is the encryption of a randomly selected message vector $M_b$, where $b \in \{0, 1\}$. Then, $\mathcal{A}$ should guess which message vector was chosen to get $C$. Therefore, for $IND-CPA$ attack, we define the advantage of the adversary, also known as $\epsilon$, as follows:

\begin{table}[!h]
	
	\centering
	\label{tbl:ExpAdvan}
	
	\begin{tabular}{l}
		\textbf{Adv}$_{AMOUN,~\mathcal{A}_{cpa}}^{cpa}$($\alpha$) = Pr[\textbf{Exp}$_{AMOUN,~\mathcal{A}_{cpa}}^{cpa-0}$($\alpha$) = 0] --\\ Pr[\textbf{Exp}$_{AMOUN,~\mathcal{A}_{cpa}}^{cpa-1}$($\alpha$) = 0].\\
	\end{tabular}
\end{table}

The proposed cryptographic scheme $AMOUN$ is said to be $IND-CPA$ secure if the function \textbf{Adv}$_{AMOUN,~\mathcal{A}_{cpa}}^{cpa}(.)$ is negligible. In other words, $\epsilon$ should be negligible for any random polynomial-time adversary $\mathcal{A}$. In this research, we define the advantage function, $\epsilon$, of the proposed scheme for indistinguishability under adaptive chosen plaintext attack with respect to the $time-complexity$ metric, where the $time-complexity$ is defined as the execution time in the worst-case scenario that the adversary $\mathcal{A}$ requires to perform the above-mentioned experiment, on a specific fixed hardware platform.

\section{Algorithms} \label{subsec:Algorithms}

This section discusses $AMOUN$'s algorithms: key generation, initialization, encryption, and decryption. Both key generation and decryption are performed by the recipients, while initialization and encryption are done by the sender. The main novelty of $AMOUN$ is the ability to effectively integrates the mathematical formulations of CRT \cite{ding1996chinese}, prime factorization problem \cite{rivest1978method}, discrete logarithm problem \cite{diffie1976new}, and the use of the noise parameter \cite{coron2011fully,gentry2009fully}, in order to generate a lightweight multi-recipient asymmetric cryptographic scheme. 

In order for $AMOUN$ to take advantage of CRT, all $N_i$s must be pairwise relatively prime. In this research, as the size of prime numbers that are used is $1024-bits$ or larger, we assume that there is little to no chance of prime collision, where the probability that two recipients pick the same prime is extremely low \cite{mansour2019amoun,lindell2014introduction,barker2019transitioning}. Table \ref{tbl:symbols} shows the glossary of symbols used by $AMOUN$.

\subsection{Key Generation $(KG)$}

This algorithm generates four distinct random prime numbers $k_i$, $p_i$, $q_i$, and $v_i$. These primes are generated by satisfying the following chain of inequalities:

\begin{equation}\label{E1}
1^\alpha>k_i\cong p_i\cong q_i>v_i.
\end{equation}

Next, the recipient generates a positive natural random number, $y_i$, with size less than the size of the prime $v_i$, as shown in equation \eqref{E2}.

\begin{equation}\label{E2}
y_i=Random(1,v_i).
\end{equation}

Since $v_i$ is a prime number and $y_i$ is strictly less than $v_i$, which implies that $gcd(v_i,y_i)=1$. Therefore, by the extended Euclidean algorithm \cite{lindell2014introduction,motzkin1949euclidean}, the recipient is able to obtain the inverse of $y_i$ modulo $v_i$, $y_i^{-1}$. Thus, $\exists y_i^{-1}~(\bmod~v_i)$ such that:

\begin{equation}\label{E3}
y_i^{-1}\cdot y_i \equiv y_i\cdot y_i^{-1} \equiv 1~(\bmod~v_i).
\end{equation}

So far, the recipient has obtained 6 integer numbers, from which $k_i$, $v_i$, and $y_i$ make the private key, $K_i^-<k_i, v_i, y_i>$, needed to decrypt the message. The other three, $p_i, q_i,$ and $y_i^{-1}$, are used to generate the elements of the public key $K_i^+<N_i,e_i,d_i>$, satisfying $gcd(N_i, e_i)=1$, $gcd(N_i, d_i)=1$, and $gcd(e_i, d_i)=1$, as follows:

\begin{equation}\label{E4}
N_i=k_i\cdot p_i,
\end{equation}

\begin{equation}\label{E5}
e_i = k_i\cdot q_i+y_i^{-1},
\end{equation}

\begin{equation}\label{E_d}
d_i = v_i^{k_i}~(\bmod~N_i).
\end{equation}

\begin{table}[!t]
	
	\centering
	\caption{Glossary of symbols}
	\label{tbl:symbols}
	
	\begin{tabular}{ll}
		\toprule
		\textbf{Symbol}   & \textbf{Description}       \\  \midrule 
		$m$               & Plaintext (Message)                                    \\ 
		$C$               & Ciphertext                                             \\ 
		$K^+_i$           & Public key for recipient $i$                           \\ 
		$K^-_i$           & Private key for recipient $i$                          \\ 
		$N_i, e_i, d_i$   & Parameters from public key                             \\ 
		$N'_i, AX_i$      & Parameters computed in Initialization                      \\ 
		$X$               & Product of all $N_i$s                                 \\ 
		$k_i, p_i, q_i, v_i$   & Primes used in Key Generation                          \\ 
		$y_i$             & Integer number used in Key Generation                 \\ 
		$y^{-1}_i$        & Modular inverse of $y_i$                               \\ 
		$t_i, f_i$        & Integer numbers used in Initialization                 \\ 
		$A_i$             & Modular inverse calculated in Initialization          \\ 
		$e'_i, S_i$       & Parameters calculated in Encryption                      \\ 
		$n$               & Number of recipients                                   \\ 
		$\alpha$          & Size of keys in bits                                   \\ 
		$gcd$             & Greatest Common Divisor                                 \\ 
		
		\bottomrule
	\end{tabular}
\end{table}

Algorithm \ref{algo:KeyGen} presents the steps for Key Generation algorithm.

\RestyleAlgo{rules}\LinesNumbered   


\begin{algorithm}
	\SetKwData{Left}{left}
	\SetKwData{This}{this}
	\SetKwData{Up}{up}
	\SetKwFunction{Union}{Union}
	\SetKwInOut{Input}{input}
	\SetKwInOut{Output}{output}
	\caption{Pseudo code of Key Generation ($KG$)}
	\label{algo:KeyGen}

	\Input{Size of keys in bits, $\alpha$, size of array $Keys$, $w$.}
	\Output{Array $Keys$ which contains both $K_i^+$ and $K_i^-$, in addition to elements needed to generate them. \newline}
	
	Initialize $Keys$ to an $array$ of size $w$\;
	
	\For {$i \leftarrow 0$ \KwTo $3$}
	{
		$Keys[i] = GenerateRandomPrime(\alpha)$\;
	}
	
	$Keys[4] = GenerateRandom(\alpha)$\;
	$Keys[5] = ModularInverse(Keys[4])$\;
	$Keys[6] = Keys[0] \cdot Keys[1]$\;
	$Keys[7] = (Keys[0] \cdot Keys[2] + Keys[5]) mod Keys[6]$\;
	$Keys[8] = POW(Keys[3], Keys[0])~mod~Keys[6]$\;
	
	\KwRet {$Keys$}\; 
	
\end{algorithm}


\subsection{Initialization $(I)$}

In this algorithm, we describe the setup that the sender needs before encrypting the messages. The sender picks a group of $n>1$ recipients and gets their respective public keys $K^+_1, ...,K^+_n$. Then, the sender computes $N'_i$ for every recipient $i$, where $f_i$ and $t_i$ are random integers less than $1^\alpha$, as shown in equation \eqref{E_N}.

\begin{equation}\label{E_N}
N'_i=N_i\cdot f_i+d_i\cdot t_i,
\end{equation}

Afterwards, the sender computes the product of all $N_i$s, denoted as $X$, as shown in equation \eqref{E6}. Since $N_i$s are pairwise relatively prime, then by the extended Euclidean algorithm \cite{lindell2014introduction,motzkin1949euclidean}, we have that $gcd(N_i,\frac{X}{N_i})=1$, $\forall i$, which implies that there exists a modular inverse $A_i$, as shown in equation \eqref{E7}. Using $A_i$ and $\frac{X}{N_i}$, the sender computes $AX_i$ for each recipient $i$, as shown in equation \eqref{E8}. Next, the sender produces a list of initialization parameters $L =\{\{N'_1, AX_1\}\}, \ldots, \{N'_n, AX_n\}\}$, which is needed for encryption phase.

\begin{equation}\label{E6}
X= \prod_{i=1}^{n} N_i,
\end{equation}

\begin{equation}\label{E7}
A_i \cdot \frac{X}{N_i} \equiv 1~(\bmod~N_i), 
\end{equation}

\begin{equation}\label{E8}
AX_i= A_i \cdot \frac{X}{N_i}.
\end{equation}

Moreover, the sender computes the initialization parameters list $L$ only once for a specific receiving group. If any change in membership, due to the addition or removal of members, occurs in the receiving group, the sender applies the following modifications with minimum overhead to the list of recipients: a) adding $L_i$ to the list when new recipient $i$ is added to the receiving group, or b) removing $L_i$ from the existing list when the corresponding $i$th recipient leaves the group. In addition to that, there is no need for communication between the sender and recipients when any change happens to the receiving group. Therefore, the challenges due to change in the membership are effectively resolved by $AMOUN$.

Algorithm \ref{alg:Init} presents the steps for Initialization algorithm.


\begin{algorithm}
	\SetKwData{Left}{left}
	\SetKwData{This}{this}
	\SetKwData{Up}{up}
	\SetKwFunction{Union}{Union}
	\SetKwInOut{Input}{input}
	\SetKwInOut{Output}{output}
	\caption{Pseudo code of Initialization ($I$)}
	\label{alg:Init}

	\Input{
		$K^+_1<N_1, e_1, d_1>, ..., K^+_n<N_n, e_n, d_n>$ public keys of recipients.}
	\Output{Array $Init\_L$ where $L_i$ is calculated for every recipient $i$ and $X$ is the last element in the array.\newline}

	Initialize $Init\_L$ to an $array$ of size $n+1$\;
	$Init\_L[n+1] = 1$\;
	Initialize $Init\_N$ to an $array$ of size $n$\;
	Initialize $Init\_A$ to an $array$ of size $n$\;
	Initialize $Init\_AX$ to an $array$ of size $n$\;
	Initialize $f$ to an $array$ of size $n$\;
	Initialize $t$ to an $array$ of size $n$\;
	
	\For {$i \leftarrow 1$ \KwTo $n$}
	{
		$f[i] = GenerateRandom(\alpha)$\;
		$t[i] = GenerateRandom(\alpha)$\;
		$Init\_N[i] = N[i] \cdot f[i] + d[i] \cdot t[i]$\;
		$Init\_L[n+1] \cdot= N[i]$\;
		\tcc*[f]{$Init\_L[n+1]$ contains $X$}	
	}

	\For {$i \leftarrow 1$ \KwTo $n$}
	{
		\tcc*[f]{Finding $A_i$, using the extended Euclidean algorithm, as a modular inverse of $\frac{X}{N_i}$} \\ 
		
		$Init\_A[i]=exEucl((Init\_L[n+1] /N[i]),N[i])$\;
		$Init\_AX[i]= Init\_A[i] \cdot Init\_L[n+1] /N[i]$\;
		$Init\_L [i] = \{Init\_N[i], Init\_AX[i]\}$\;
	}
	
	\KwRet {$Init\_L$}\; 
\end{algorithm}


\texttt{\\}

\subsection{Encryption $(E)$} \label{Enc_sec}

In this algorithm, the sender takes in a message vector $M$ of size $n$ in addition to $L$, where $n>1$. The only constraint for the size of message $m_i$ is that it must be less than $v_i$. Moreover, the sender computes $e'_i$, $\forall i$, using $e_i, N'_i$, and $r_i$, as shown in equation \eqref{E_e}.

\begin{equation}\label{E_e}
e'_i=e_i+N'_i\cdot r_i. 
\end{equation}

The parameter $r_i$ is a different randomly generated coin that is produced using a truly random number generator less than $1^\alpha$. The reason for adding this coin $r_i$ is to randomize the resultant ciphertext in order to avoid IND-CPA attack, as discussed in section \ref{sec:SecurityAnalysis}. Using the parameters $e'_i$ and $AX_i$, the sender computes the encryption parameter $S_i$ for each recipient $i$, as shown in equation \eqref{E8N}.

\begin{equation}\label{E8N}
S_i=e'_i\cdot AX_i.
\end{equation}

After that, using $M$ and $S$, the sender generates the common ciphertext $C$ for all recipients using equation \eqref{E9}. This ciphertext is then sent to the receiving group. Alternatively, messages of multiple recipients can be encrypted in parallel first, and then all resultant sub-ciphertexts, $c_1, \cdots, c_n$, could be added. Also, if memory is a concern for the encryption, $\bmod~X$ can be done on each term individually. So, additions can be split up with modulus running in-between to minimize memory utilization.

\begin{equation}\label{E9}
C=\Big(\sum_{i=1}^{n} m_i\cdot S_i\Big)~(\bmod~X). 
\end{equation}

Algorithm \ref{alg:Enc} presents the steps for Encryption algorithm.


\RestyleAlgo{rules}\LinesNumbered   

\begin{algorithm}
	
	\SetKwData{Left}{left}
	\SetKwData{This}{this}
	\SetKwData{Up}{up}
	\SetKwFunction{Union}{Union}
	\SetKwInOut{Input}{input}
	\SetKwInOut{Output}{output}
	\caption{Pseudo code of Encryption ($E$)}
	\label{alg:Enc}

	\Input{Message vector $M$, list of parameters $L$, and parameter $X$.}
	\Output{Ciphertext $C$.\newline}
	
	Initialize $Enc\_S$ to an $array$ of size $n$\;
	Initialize $Enc\_e$ to an $array$ of size $n$\;
	Initialize $r$ to an $array$ of size $n$\;
	
	$C = 0$\;
	
	\For {$i \leftarrow 1$ \KwTo $n$}
	{
		$r[i] = GenerateRandom(\alpha)$\;
		$Enc\_e[i] = e[i]+Init\_N[i] \cdot r[i]$\;
		$Enc\_S[i] = Enc\_e[i]\cdot Init\_AX[i]$\;
		$C~+= m[i] \cdot Enc\_S[i];$ 	
	}
	$C = C~\bmod~X$\;
	\KwRet {$C$}\;
	
\end{algorithm}


\subsection{Decryption $(D)$}

In this algorithm, each recipient decrypts the ciphertext $C$ using its own private key, $K_i^- <k_i, v_i, y_i>$, in order to retrieve $m_i$, using equation \eqref{E10}.

\begin{equation}\label{E10}
m_i=(C~(\bmod~k_i)\cdot y_i)~(\bmod~v_i).
\end{equation}

Algorithm \ref{alg:Dec} presents the steps for Decryption algorithm.


\begin{algorithm}
	
	\SetKwData{Left}{left}
	\SetKwData{This}{this}
	\SetKwData{Up}{up}
	\SetKwFunction{Union}{Union}
	\SetKwInOut{Input}{input}
	\SetKwInOut{Output}{output}
	\caption{Pseudo code of Decryption ($D$)}
	\label{alg:Dec}

	\Input{Ciphertext $C$, private key $K_i^-<k_i, v_i, y_i>$.}
	\Output{Message $m_i$.\newline}
	
	$m_i = (( C~\bmod~k_i) \cdot y_i )~\bmod~v_i$\;
	\KwRet {$m_i$}\; 
	
\end{algorithm}

\section{Validation Analysis} \label{sec:ValidationAnalysis}

In this section, we validate our proposed cryptographic scheme mathematically by giving an algebraic proof of equation \eqref{E10}. Furthermore, we prove that each recipient $i$ can obtain the message $m_i$ sent to him by applying the private key elements, $k_i, v_i$ and $y_i$ to the ciphertext $C$, as shown in \eqref{E10}. We do so by considering the right hand-side of the equation \eqref{E10}, and through some operations based on congruence properties, we derive the message $m_i$ \cite{lindell2014introduction}. For the proposed scheme to properly decrypt the ciphertext $C$ and obtain the message sent, the relation between the message $m_i$ and the parameters $k_i, v_i, y_i^{-1}, t_i$, and $r_i$, should satisfy the following inequality:

\begin{dmath} \label{New_191}
	m_i < \frac{k_i}{y_i^{-1}+v_i \cdot t_i \cdot r_i}.
\end{dmath}

By substituting $C$ in the right hand-side of \eqref{E10} we get,

\begin{dmath}
	(C~(\bmod~k_i)\cdot y_i)~(\bmod~v_i)= 
\end{dmath}

\begin{dmath} 
	=(\big( (\bigg(\sum_{i=1}^{n} (m_i\cdot e'_i \cdot A_i \cdot \frac{X}{N_i})\bigg)\\ (\bmod~X) )~(\bmod~k_i)\big)\cdot y_i)~(\bmod~v_i). \label{New_1}
\end{dmath}

The following lemma, helps us in the next equality.

\begin{lemma}\label{lemma1}
	Let $a,b,c$ be three integer numbers. If $a|b$, then $(c~(\bmod~b))~(\bmod~a)=c~(\bmod~a)$
\end{lemma}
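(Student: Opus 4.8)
\textbf{Proof proposal for Lemma~\ref{lemma1}.} The plan is to unfold both sides using the division algorithm and the hypothesis $a \mid b$. First I would write $b = a \cdot \ell$ for some integer $\ell$, which is exactly what $a\mid b$ gives us. Next, set $r = c~(\bmod~b)$, so that $c = b\cdot s + r$ for some integer $s$ with $0 \le r < b$. The left-hand side of the claimed identity is then $r~(\bmod~a)$, so I need to compare $r~(\bmod~a)$ with $c~(\bmod~a)$.

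The key step is to substitute $b = a\ell$ into $c = b s + r$, obtaining $c = a\ell s + r = a(\ell s) + r$. Reducing modulo $a$ and using the additive/multiplicative congruence properties from Theorem~2 (the statement that $a\equiv b$ and $c\equiv d$ imply $a+c\equiv b+d$, together with $a \equiv 0~(\bmod~a)$), we get $c \equiv r~(\bmod~a)$. Hence $c~(\bmod~a) = r~(\bmod~a) = (c~(\bmod~b))~(\bmod~a)$, which is precisely the assertion. I would phrase the final congruence-to-equality step by invoking Lemma~1 (that $a\equiv b~(\bmod~n)$ iff $n\mid(a-b)$): since $a \mid (c - r)$, the two integers $c$ and $r$ leave the same remainder upon division by $a$.

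I do not anticipate a genuine obstacle here — the argument is a short, direct computation. The only point requiring a little care is bookkeeping: one must keep the two distinct reductions (modulo $b$ first, then modulo $a$) clearly separated, and make sure the intermediate quantity $r$ is treated as an ordinary integer when it is subsequently reduced modulo $a$, rather than conflating ``the residue class'' with ``the canonical representative.'' Once that is stated cleanly, the chain $c = a(\ell s) + r \Rightarrow c \equiv r~(\bmod~a) \Rightarrow c~(\bmod~a) = r~(\bmod~a)$ closes the proof.
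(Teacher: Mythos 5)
Your proposal is correct and follows essentially the same route as the paper's own proof: write $c = bs + r$ with $r = c~(\bmod~b)$, substitute $b = a\ell$ to get $c = a(\ell s) + r$, and conclude $c \equiv r~(\bmod~a)$. Your write-up is in fact slightly cleaner, since the paper's substitution step contains a typo ($c = s\cdot a\cdot b + u$ where it should read $c = s\cdot a\cdot q + u$) that your version avoids.
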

\begin{proof}
	Let $c=b\cdot q+u$, where $q \in \Z$ and $0 \le u < b$. Since, $a|b$, then
	$b=s\cdot a$ for some $s \in \Z$. So, by plugging this in for
	$c=s\cdot a\cdot b+u$. Hence, $(c~(\bmod~b))~(\bmod~a)\equiv u~(\bmod~a)\equiv c~(\bmod~a)$.
\end{proof}

Since $k_i|N_i=k_i\cdot p_i$ and $N_i | X$, then $k_i | X$ by transitivity property of division. By Lemma \ref{lemma1}, we get that $(C~(\bmod~X))~(\bmod~k_i)\equiv C~(\bmod~k_i)$. 

Thus, equation \eqref{New_1} will become

\begin{dmath} 
	=(\big (\bigg(\sum_{i=1}^{n} (m_i\cdot e'_i \cdot A_i \cdot \frac{X}{N_i})\bigg) ~(\bmod~k_i)\big)\cdot y_i)~(\bmod~v_i). \label{New_2}
\end{dmath}

Because $k_i|\frac{X}{N_j}$, $\forall j \neq i$, $j \in \{1, 2, \ldots, n\}$, it is implied that $m_j\cdot e'_j\cdot A_j \cdot \frac{X}{N_j}\equiv 0~(\bmod~k_i)$. This leads us to the following equation:

\begin{equation}\label{New_3}
=(( \bigg(m_i\cdot e'_i\cdot A_i\cdot \frac{X}{N_i} \bigg)~(\bmod~k_i))\cdot y_i)~(\bmod~v_i).
\end{equation}

Equation \eqref{New_3} contains the modular expression $\big(A_i\cdot \frac{X}{N_i}\big)~(\bmod~k_i)$ which is equal to $\big(\Big(A_i\cdot \frac{X}{N_i}\Big)~(\bmod~N_i)\big)~(\bmod~k_i)\equiv 1$, from Lemma \ref{lemma1} and equation \eqref{E7}. Therefore, $\big(A_i\cdot \frac{X}{N_i}\big)~(\bmod~k_i)$ must also be $1$. Using this, we get 

\begin{equation}\label{New_4}
=(( (m_i\cdot e'_i)~(\bmod~k_i))\cdot y_i)~(\bmod~v_i).
\end{equation}

Substituting $e'_i$ by its equivalent expression given in equation \eqref{E_e}, we obtain 

\begin{equation}\label{New_5}
=(( (m_i\cdot (e_i + N'_i \cdot r_i))~(\bmod~k_i))\cdot y_i)~(\bmod~v_i).
\end{equation}

Furthermore, we simultaneously replace $e_i$ and $N'_i$ with their corresponding equivalent expressions given in \eqref{E5} and \eqref{E_N}, respectively. This leads us to 

\begin{dmath}\label{New_6}
	=(( (m_i\cdot (k_i \cdot q_i + y^{-1}_i + (N_i \cdot f_i + d_i \cdot t_i)\cdot r_i))\\~(\bmod~k_i))\cdot y_i)~(\bmod~v_i).
\end{dmath}

Since $k_i |N_i$ and $k_i | k_i \cdot q_i$, after applying $\bmod~k_i$, we obtain 

\begin{dmath}\label{New_7}
	=(( (m_i\cdot (y^{-1}_i + d_i \cdot t_i \cdot r_i))~(\bmod~k_i))\cdot y_i)~(\bmod~v_i).
\end{dmath}

Based on Fermat's little theorem \cite{weisstein2004fermat}, since $d_i=v_i^{k_i}~\bmod~N_i$, then after applying $\bmod~k_i$ the last equation becomes

\begin{dmath}\label{New_71}
	=(( (m_i\cdot (y^{-1}_i + v_i \cdot t_i \cdot r_i))~(\bmod~k_i))\cdot y_i)~(\bmod~v_i).
\end{dmath}

In order to obtain the message $m_i$, we need $m_i\cdot (y^{-1}_i + v_i \cdot t_i \cdot r_i)$ to be less than $k_i$, as shown in inequality \eqref{New_191}. Therefore, applying $\bmod~k_i$ will not reduce the original expression, $m_i\cdot (y^{-1}_i + v_i \cdot t_i \cdot r_i)$. Thus, eliminating $\bmod~k_i$ we get the following:

\begin{dmath}\label{New_72}
	=( (m_i\cdot (y^{-1}_i + v_i \cdot t_i \cdot r_i))\cdot y_i)~(\bmod~v_i).
\end{dmath}

Because $y^{-1}_i \cdot y_i ~\equiv 1 (~\bmod~v_i)$, by multiplying by $y_i(~\bmod~v_i)$, we get the following equation:

\begin{dmath}\label{New_8}
	=(m_i + m_i \cdot v_i \cdot t_i \cdot r_i \cdot y_i )~(\bmod~v_i).
\end{dmath}

Since the second term in \eqref{New_8} contains $v_i$ and $m_i<v_i$, then after applying $\bmod~v_i$, the equation simplifies to

\begin{equation}\label{New_10}
=m_i.
\end{equation}

Thus, proof of equation \eqref{E10} shows the validity of $AMOUN$.

\section{Security Analysis} \label{sec:SecurityAnalysis}

This section shows how $AMOUN$ is indistinguishable under adaptive chosen plaintext attack ($IND-CPA$) by proving Theorem \ref{MainVIT}. In $IND-CPA$, the adversary can encrypt any number of messages and retrieve the corresponding ciphertext, where every encryption of a message should link to a new ciphertext, even when all $K_i^+$s and $m_i$s do not change \cite{mansour2019amoun,lindell2014introduction}. The right-hand side of equation \eqref{E28} represents the encryption function of our cryptosystem, on which indistinguishability depends. There are three possible distinct cases in terms of knowledge that the adversary might have: a) the first case is when the adversary provides the encryption oracle by $1$ out of $n$ messages, b) the second case when the adversary provides $i$ out of $n$ messages where $1 < i < n$, and c) finally when the adversary provides all the messages to the encryption oracle. For simplicity, we assume that the adversary provides all the messages being encrypted, and the adversary knows members of the receiving group; therefore, the adversary knows what public keys, $K_1^+, ..., K_n^+$, are being used.

\begin{equation}\label{E28}
C=m_1\cdot S_1+...+m_i\cdot S_i+...+m_n\cdot S_n-BX,
\end{equation}

In equation \eqref{E28}, the operand $B$ is the amount of information loss, or the quotient, when the modulus of $C$ is taken over $X$. Therefore, $B$ is dependent on the size of all messages, $f_is, t_is, r_is$, and $X$. On the other hand, to take advantage of CRT in our proposed scheme, we must assume that $n>1$, where $n$ is the number of messages being encrypted.

\begin{theorem}	\label{MainVIT}
	\textit{The proposed cryptosystem, $AMOUN$, is indistinguishable under adaptive chosen plaintext attack, assuming that a) the prime factorization problem is computationally hard, b) the discrete logarithm problem is hard in the group $Z_p$, c) the use of the noise parameter makes the $gcd$ attack impossible, and d) finding the correct point on a given plane and a given line is computationally infeasible.}
\end{theorem}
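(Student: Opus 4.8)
The plan is to show that the $IND$-$CPA$ advantage $\mathbf{Adv}_{AMOUN,\,\mathcal{A}_{cpa}}^{cpa}(\alpha)$ is negligible by a game-hopping argument, reducing any non-negligible distinguisher to a solver for one of the four hardness assumptions (a)--(d). First I would fix notation: the challenge ciphertext is $C=\bigl(\sum_{i=1}^n m_i\cdot S_i\bigr)\,(\bmod~X)$ with $S_i=e'_i\cdot AX_i$ and $e'_i=e_i+N'_i\cdot r_i$, so $C$ depends on the message vector $M_b$ only through the terms $m_i\cdot S_i$, while the coins $r_i$ (fresh per encryption) and the initialization randomness $f_i,t_i$ enter multiplicatively. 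The starting game is $\mathbf{Exp}^{cpa\text{-}b}_{AMOUN,\mathcal{A}_{cpa}}(\alpha)$; the target game is one in which the challenge ciphertext is replaced by a uniformly random element of $\Z_X$, independent of $b$, in which game $\mathcal{A}_{cpa}$ has advantage exactly $0$.

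The key steps, in order, are: (i) Argue that to recover any single $m_i$ from $C$ an adversary must effectively learn $C\,(\bmod~k_i)$ together with $y_i$; but $k_i$ is a secret prime factor of the public modulus $N_i=k_i\cdot p_i$, so extracting $k_i$ from $N_i$ is exactly the prime factorization problem (assumption (a)), and $y_i$ is hidden in $e_i=k_i\cdot q_i+y_i^{-1}$ behind the same unknown factor. (ii) Argue that even granting a partial handle on the public data, the component $d_i=v_i^{k_i}\,(\bmod~N_i)$ ties the recoverable noise term $v_i\cdot t_i\cdot r_i$ (cf. the derivation leading to \eqref{New_71}) to a discrete-log-type relation in $\Z_{p}$, so inverting it contradicts assumption (b). (iii) Show that the only structural attack that bypasses factoring --- computing $\gcd(N_i',N_i)$ or $\gcd$s among the $e_i$'s to peel off $k_i$ or $d_i$ --- fails because $N_i'=N_i\cdot f_i+d_i\cdot t_i$ and $e_i=k_i\cdot q_i+y_i^{-1}$ are each masked by the freshly chosen noise terms $t_i,f_i,y_i^{-1}$, so with overwhelming probability these $\gcd$s are $1$ and reveal nothing (assumption (c)). (iv) Finally, observe that a distinguisher between $M_0$ and $M_1$ that uses none of (i)--(iii) is reduced to the geometric problem of assumption (d): the pair $(e'_i,S_i)$, after the known reductions, pins the unknown $(m_i,$ noise$)$ to a single point lying on a known line in a known plane, and locating that point is the stated infeasible problem. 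I would then chain the game hops: each hop from $\mathbf{Exp}^{cpa\text{-}b}$ toward the random-ciphertext game is justified by one of (i)--(iv), with the statistical gaps (prime-collision probability, $\gcd$ being non-trivial) bounded using the hypothesis that primes are $1024$-bit or larger, hence negligible in $\alpha$.

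Collecting the bounds, I would write
\begin{equation}\label{eq:adv-bound}
\mathbf{Adv}_{AMOUN,\,\mathcal{A}_{cpa}}^{cpa}(\alpha)\le \mathbf{Adv}^{\mathrm{fact}}(\alpha)+\mathbf{Adv}^{\mathrm{dlog}}(\alpha)+\mathbf{Adv}^{\mathrm{geom}}(\alpha)+\nu(\alpha),
\end{equation}
where $\nu$ absorbs the prime-collision and non-trivial-$\gcd$ terms, and conclude that the right-hand side is negligible under (a)--(d), so $AMOUN$ is $IND$-$CPA$ secure.

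The main obstacle I expect is step (iv) and, more broadly, making the reductions tight rather than heuristic: the argument must rule out \emph{combined} attacks --- an adversary who exploits the algebraic relations among $C$, the several $(N_i,e_i,d_i)$, and the initialization list $L$ simultaneously across recipients --- not just each leakage channel in isolation. In particular, showing that the sum structure in \eqref{E28} (with the unknown quotient $B$) genuinely hides every individual $m_i$, and that the "point on a line in a plane" formulation of assumption (d) is the right abstraction of what remains after the factoring and discrete-log reductions, is the delicate part; I would handle it by carefully tracking exactly which quantities become available to $\mathcal{A}_{cpa}$ after each game hop and verifying that the residual uncertainty in $(m_i,\text{noise})$ is always at least one-dimensional over the relevant ring.
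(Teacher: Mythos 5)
Your plan takes essentially the same route as the paper: both arguments reduce $IND$-$CPA$ security assumption-by-assumption --- factoring for $N_i$ and $k_i$, discrete log for $d_i$, the noise term against $\gcd$ attacks, and the plane/line point-finding lemmas for the triples $(N'_i,f_i,t_i)$ and pairs $(e'_i,r_i)$ --- and both conclude by replacing the real encryption parameters $S_{real}=\{S_1,\dots,S_n\}$ with random ones and arguing the adversary cannot distinguish the two distributions, which is precisely your ``uniformly random ciphertext'' target game. The only element of the paper's proof your outline leaves implicit is its first Claim, that two encryptions of the same message vector under the same keys yield distinct ciphertexts (proved by deriving $r_i\equiv r_i''\pmod{k_i}$ from the assumption $C=C''$); this is subsumed by your final game hop but deserves explicit treatment, since the $IND$-$CPA$ adversary can query the encryption oracle on the challenge vectors themselves.
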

\begin{proof} We prove Theorem \ref{MainVIT} by proving all the sufficient conditions that are mentioned in the statement of the theorem, following the approach given in \cite{cramer1998practical,cramer2003design}.

	Now, we start by examining the security of the private and public keys, $K_i^-$ and $K_i^+$, respectively. First, we show the security of $K_i^-<k_i, v_i, y_i>$. The element $k_i$ is a prime number, and in order to get $N_i$ of size $2048-bits$, based on the size of $N_i$ that is generally used in RSA \cite{barker2019transitioning}, size of $k_i$ should be $1024-bits$. According to \cite{weisstein2003prime}, there are approximately $2^{1015}$ prime numbers of size less than $1024-bits$, which makes it computationally hard for an adversary to find $k_i$. The same logic as for $k_i$ works to show that finding $v_i$ is computationally infeasible. Since $y_i \in \N$ and $y_i<v_i$, then finding $y_i$ is at least as computationally hard as finding $v_i$. Therefore, choosing a large enough size for $k_i, v_i$, implies the security of $K_i^-$. 
	
	Furthermore, we show the security of $K_i^+<N_i,e_i,d_i>$. The first element of $K_i^+$, $N_i$, is the product of primes $k_i$ and $p_i$, as shown in equation \eqref{E4}. Since the prime factorization of integers is known to be computationally infeasible, for large enough primes, it is computationally infeasible to factor $N_i$ \cite{rivest1978method,lindell2014introduction}.

	\begin{definition}$\textbf{[Primes factoring assumption]}$ For every probabilistic polynomial time adversary $\mathcal{A}$, there is a negligible function $\epsilon$ such that:
		\begin{center}
			Pr[$\mathcal{A}(N_i)\in\{k_i, p_i\}]\le \epsilon(n)$,
		\end{center}
		
		where $k_i$ and $p_i$ are primes with size $n-bits$ and $N_i = k_i \cdot p_i$.
	\end{definition}
	
	According to \cite{lindell2014introduction}, the best-known heuristic asymptotic running time algorithm for prime factorization runs on average in time $2^{O(n^{1/3}\cdot (log~n)^{2/3})}$ to factor a number of size $n-bits$. Therefore, $\epsilon(n)=1/2^{O(n^{1/3}\cdot (log~n)^{2/3})}$. The above complexity implies	the security of Theorem \ref{MainVIT}-(a) and proves the security of $N_i$ in our cryptosystem.

	One possible way to break the security is by computing the $gcd$ of two different $N_i$s \cite{coron2011fully}. In other words, if $gcd(N_i, N_j)>1$, then the adversary can obtain one of the primes $k_i$ or $p_i$. Therefore, the adversary can break the security of the cryptosystem. This could be an issue if the same prime is used more than once to generate $N_i$s. To solve this issue, the recipients choose their primes randomly from a sufficiently large space, and these primes should be truly unpredictable numbers. In the proposed scheme, we assume that there is little to no chance of prime collision, where the probability that two recipients pick the same prime is extremely low \cite{mansour2019amoun,mansour2017multi,lindell2014introduction}. Therefore, the chance of primes collision is profoundly low.
	
	The element $e_i$ is generated using primes $k_i$ and $q_i$ in addition to the modular inverse $y_i^{-1}$, which is a noise parameter added to guarantee $gcd(N_i,e_i)=1$ \cite{coron2011fully,gentry2009fully}, as shown in equation \eqref{E5}. In the absence of $y_i^{-1}$, the element $k_i$ would be a divisor of both $e_i$ and $N_i$, which would imply that $k_i|gcd(e_i,N_i)$; therefore, the adversary can break the system. 
	
	\begin{definition} $\textbf{[gcd attack on the noise parameter]}$ For every probabilistic polynomial time adversary $\mathcal{A}$, there is a negligible function $\epsilon$ such that,
		\begin{center}
			Pr$[\mathcal{A}(N_i, e_i)=k_i]\le \epsilon(n)$,
		\end{center}
		
		where $N_i = k_i \cdot p_i$ and $e_i = k_i \cdot q_i+y^{-1}_i$ such that $k_i$, $p_i$ and $q_i$ are random primes with size $n-bits$, and $|y^{-1}_i|<2^\alpha$.
	\end{definition}
	
	According to \cite{coron2011fully}, the best-known $gcd$ attack to find $gcd(N_i, e_i)=k_i$, using fast multiplication, is with the asymptotic complexity of $2^\alpha \cdot O(n)$, for integers of size $n-bits$. Hence, $\epsilon(n)=1/(2^\alpha \cdot O(n))$. This implies the security of Theorem \ref{MainVIT}-(b) and proves the security of the parameter $e_i$. 	
	
	The last element of $K^+_i$, $d_i$, is generated using the two primes $k_i$ and $v_i$, as shown in equation \eqref{E_d}. Since the discrete logarithm problem is known to be mathematically and computationally infeasible, then getting $k_i$ and $v_i$ given $d_i$ is not possible, as the parameter $d_i$ is non-invertible because the inverse of $d_i$ is not unique \cite{diffie1976new,lindell2014introduction}.
	
	\begin{lemma}
		Let $\mathbb{G}$ be a finite group. Let $k_i, v_i$ be two prime numbers in the group $\mathbb{G}$. The element $N_i$ is the product of two primes $k_i$ and $p_i$. Setting $d_i \equiv v_i^{k_i}~(\bmod~N_i)$ gives the same distribution for $\hat{d_i}$ as choosing random $\hat{d_i} \in \mathbb{G}$, i.e., for any $\hat{d_i} \in \mathbb{G}$
		\begin{center}
			Pr[$v_i^{k_i}~(\bmod~N_i) \equiv \hat{d_i}$] = 1/$|\mathbb{G}|$.
		\end{center}
	\end{lemma}
	
	\begin{proof}
		Let $\hat{d_i} \in \mathbb{G}$ be arbitrary. Then,
		\begin{center}
			Pr[$v_i^{k_i}~(\bmod~N_i) = \hat{d_i}$] = Pr[$k_i = log_{v_i}{(d_i + q \cdot N_i)}$],
		\end{center}
		where $q$ represents the infinite quotients of $d_i~(\bmod~N_i)$.
		
		Since both $k_i$ and $v_i$ are unknown primes that are truly randomly generated, then the probability that $k_i$ is equal to $log_{v_i}{(d_i+ q \cdot N_i)}$ is exactly 1/$|\mathbb{G}|$ \cite{lindell2014introduction}.
	\end{proof}
	The above complexity implies the security of Theorem \ref{MainVIT}-(c) and proves the security of $d_i$ in our cryptosystem. 
	
	Next we show that the private key elements $k_i, v_i, y_i$ are secure against $gcd$ attack, by knowing the public key elements $N_i, e_i, d_i$. We consider the $gcd$ of every pair of public key elements as follows:
	
	\begin{itemize}
		\item \textbf{Case 1:} Let \textbf{$gcd(N_i,e_i)=g$}. If $g=k_i$, then since $g|e_i$ and $g|k_i\cdot q_i$, we get that $g|(e_i - k_i \cdot q_i) =y^{-1}_i$ which is impossible since $k_i,y_i$ are co-prime because $k_i$ is prime and $y_i < k_i$. Using the same reasoning as above $g\neq q_i$. If $g = y^{-1}_i$, then $g | (e_i - y^{-1}_i)=k_i \cdot q_i$. The latter statement does not hold true, as both $k_i, q_i$ are prime numbers and $y^{-1}_i < k_i,q_i$. The existence of the noise parameter $y^{-1}_i$ forces $gcd(N_i,e_i)=1$ \cite{coron2011fully,gentry2009fully}.
		
		\item \textbf{Case 2:} Let \textbf{$gcd(N_i,d_i)=g$}. If $g=k_i$, then because $d_i=v^{k_i}_i -q\cdot k_i\cdot p_i$, for some quotient $q$, then $g|(d_i + q\cdot k_i \cdot p_i)=v^{k_i}_i$, which is impossible as $g \nmid v_i$, since $v_i$ is prime. Using the same reasoning $g \neq p_i$. If $g=v_i$, then $v_i | (v^{k_i}_i -d_i)=q\cdot k_i \cdot p_i$. For this to happen, as both $k_i$ and $p_i$ are primes, we need to have $v_i=k_i$ or $v_i=p_i$, which is not the case as all $k_i, p_i, v_i$ are different primes.
		
		\item \textbf{Case 3:} Let \textbf{$gcd(e_i,d_i)=g$}. If $g = y^{-1}_i$, then $g | (e_i - y^{-1}_i)= k_i \cdot q_i$. For this to happen, as both $k_i,q_i$ are prime numbers, we need to have $y^{-1}_i=k_i$ or $y^{-1}_i=q_i$, which is not the case as $y^{-1}_i < k_i, q_i$. Following the same reasoning as in case 1, $g \neq k_i, q_i$. Furthermore, as in case 2, the inequalities $g\neq v_i, p_i$ hold true.\\

	\end{itemize}
	
	We continue by proving the security of the encryption elements, $N'_i$ and $e'_i$. In the initialization phase, the sender computes $N'_i$ for each recipient using parameters $N_i$ and $d_i$, in addition to the two randomly generated numbers $f_i$ and $t_i$, which are only known to the sender, as shown in equation \eqref{E_N}. For the adversary to be able to find $N'_i =N_i \cdot f_i + d_i \cdot t_i$ is by brute force attack only since there are infinitely many triples $(N'_i, f_i, t_i)$ that can satisfy the above equation. To enforce this statement, let $N'_i=z, f_i=x, t_i=y$ and the known values (coefficients) $N_i=a$ and $d_i=b$. Then, the equation $z=a\cdot x + b \cdot y$ represents a plane. This means that there are infinitely many points satisfying the equation for a given pair $(a,b)$, as shown in Lemma \ref{PlaneLemma}.

	\begin{lemma} \label{PlaneLemma}
		The complexity of finding a three dimensional point on a plane is $$f(x,y)=\lim_{x,y\to\infty} O(x \cdot y),$$
		where $x,y$ are the dimensions of a rectangle on the plane.
	\end{lemma}
	\begin{proof}
		$O(x \cdot y)$ is the number of steps needed to find all the points within a specific rectangular area of dimensions $x,y$ on the plane. As the rectangle expands to the whole plane, and therefore the number of points increases toward infinity, then the number of steps needed to find the specific point on the plane is $$f(x,y)=\lim_{x,y\to\infty} O(x\cdot y).$$ 
	\end{proof}

	For the sake of security of the proposed cryptosystem, it is crucial that the adversary can not find $N'_i$ and $e'_i$. To the contrary, since $gcd(e'_i,N_i)=1$, then the adversary could obtain $e_i^{'-1}~(\bmod~N_i)$ and retrieve any message $m_i$, after some mathematical operations. 
	
	Now, even in the case when the adversary finds $N'_i$, which is computationally infeasible, finding the corresponding $e'_i=e_i + N'_i \cdot r_i$, when given the public key parameters $N_i, e_i, d_i$, is still computationally infeasible. This is because the above equation represents a line $y=a\cdot x+b$, where $y=e'_i,x=r_i, a=N'_i$ and $b=e_i$, and the complexity of finding a point on a line is shown in Lemma \ref{XNiLemma}. Therefore, the adversary will not be able to find the modular inverse of $e'_i$ with respect to $N_i$. Thus, retrieving the message $m_i$ by computing $C\cdot e_i^{'-1}~(\bmod~N_i)$, is computationally infeasible. 
	
	\begin{lemma} \label{XNiLemma}
		The complexity of finding a specific point in a linear equation on two variables is $$g(z)=\lim_{z\to\infty} O(z),$$
		where $z$ is the number of integers within an interval of the line represented by the equation given.
	\end{lemma}
	\begin{proof}
		$O(z)$ is the number of steps needed to find all the integers within a specific interval of the line. As the interval expands to the whole line, and therefore the number of integers increases toward infinity, then the number of steps needed to find the specific point on the line is $$g(z)=\lim_{z\to\infty} O(z).$$ 
	\end{proof}
	
	Since there are additional random numbers $f_i, t_i ,r_i$ not known by the adversary in any case, then the adversary will not be able to compute the encryption element $S_i$ of any recipient $i$. This is because $S_i$ is the multiplication of $e'_i$ and $AX_i$, as shown in equation \eqref{E8N}.

	\begin{theorem} \label{ThAll}
		\textit{Let $f(x,y)$ be the function computing the complexity of finding the correct three dimensional point $(N'_i, f_i, t_i)$ on a given plane. Also, let $g(z)$ be the function computing the complexity of finding the correct two dimensional point $(e'_i,r_i)$ on a given line. Then, the complexity to find the correct $S_i$s is $O(n \cdot f(x,y) \cdot g(z))$.}	
	\end{theorem}
	\begin{proof}
		The proof for the theorem follows directly from Lemma \ref{PlaneLemma} and Lemma \ref{XNiLemma}. 
	\end{proof}
	The above complexity implies the security of Theorem \ref{MainVIT}-(d) and proves the security of $N'_i$ and $e'_i$ in our cryptosystem. 
	
	Next we show that our cryptosystem is indistinguishable under chosen plaintext attack $(IND-CPA)$. As mentioned in Section \ref{Enc_sec}, in every ciphertext, each sub-ciphertext $c_i=m_i \cdot S_i$ contains $r_i$, where $r_i$ is a random number generated for each encryption. Due to the randomization property of $r_i$, which causes nonlinearity in equation \eqref{E28}, for the same message vector $M$ and the same public keys, different ciphertexts are generated for different encryptions.
	
	Indeed, let $M$ be a message vector of size $n$ to be encrypted using $K_1^+, ..., K_n^+$ and $S_i=e'_i \cdot A_i \cdot \frac{X}{N_i}, \forall i \in \{1,2,\ldots,n\}$. Based on equation \eqref{E28}, if we encrypt $M$ two times we will get the two ciphertexts $C$ and $C"$ as follows:
		\texttt{\\}
				\texttt{\\}
						\texttt{\\}
	\begin{dmath}
		C=(m_1\cdot e'_1 \cdot A_1 \cdot \frac{X}{N_1}+...+m_i\cdot e'_i \cdot A_i \cdot \frac{X}{N_i}+...+m_n\cdot e'_n \cdot A_n \cdot \frac{X}{N_n})~(\bmod~X),
	\end{dmath}
		\texttt{\\}
	
	\begin{dmath}
		C"=(m_1\cdot e_1" \cdot A_1 \cdot \frac{X}{N_1}+...+m_i\cdot e_i" \cdot A_i \cdot \frac{X}{N_i}+...+m_n\cdot e_n" \cdot A_n \cdot \frac{X}{N_n})~(\bmod~X).
	\end{dmath}
	\texttt{\\}
	\texttt{\\}

	\begin{claim} \label{claimRandom}
		$C\neq C"$
	\end{claim}

	\begin{proof}
		For the sake of contradiction, assume that $C=C"$. After subtracting the above equations on both sides, we get that 
		\begin{dmath}
			C-C"=(m_1\cdot (e'_1-e_1") \cdot A_1 \cdot \frac{X}{N_1}+...+m_i\cdot (e'_i-e_i") \cdot A_i \cdot \frac{X}{N_i}+...+m_n\cdot (e'_n-e_n") \cdot A_n \cdot \frac{X}{N_n})~(\bmod~X).
		\end{dmath}
		\texttt{\\}
		Since $C=C"$, then
		\begin{dmath}\label{beforemodN}
			m_1\cdot (e'_1-e_1") \cdot A_1 \cdot \frac{X}{N_1}+...+m_i\cdot (e'_i-e_i") \cdot A_i \cdot \frac{X}{N_i}+...+m_n\cdot (e'_n-e_n") \cdot A_n \cdot \frac{X}{N_n} \equiv 0~(\bmod~X).
		\end{dmath}
		\texttt{\\}
		Now taking $\bmod~N_i$ on both sides of equation \eqref{beforemodN}, we obtain the following:
		\begin{dmath}\label{aftermodN}
			m_i\cdot (e'_i-e_i") \cdot A_i \cdot \frac{X}{N_i} \equiv 0~(\bmod~N_i).
		\end{dmath}
		
		Based on equation \eqref{E7}, $A_i$ is the modular inverse of $\frac{X}{N_i}$ with respect to $N_i$. Therefore, equation \eqref{aftermodN} reduces to
		
		\begin{dmath}\label{reduction}
			m_i\cdot (e'_i-e_i") \equiv 0~(\bmod~N_i).
		\end{dmath}
		
		After substituting $e'_i, e_i"$ by their equivalent expressions as in equation \eqref{E_e}, the equivalent of the latter equation will be,
		
		\begin{dmath}\label{substitution}
			m_i\cdot N'_i \cdot (r_i-r_i")   \equiv 0~(\bmod~N_i).
		\end{dmath}
		
		Replacing $N'_i=N_i \cdot r_i +d_i \cdot t_i$ and after applying $\bmod~N_i$ on both sides of equation \eqref{substitution}, we get
		
		\begin{dmath}\label{substitution2}
			m_i\cdot d_i \cdot t_i \cdot (r_i-r_i") \equiv 0~(\bmod~N_i).
		\end{dmath}
		
		Since $m_i, t_i < k_i$ and $k_i$ is a prime number, then the inverses of $m_i$ and $t_i$ with respect to $k_i$ exist. By multiplying equation \eqref{substitution2} on both sides by $m^{-1}_i \bmod~k_i$ and $t^{-1}_i \bmod~k_i$ we obtain the following equation,
		
		\begin{dmath}\label{substitution3}
			v_i \cdot (r_i-r_i") \equiv 0~(\bmod~k_i).
		\end{dmath}
		
		Following the same reasoning as above since $v_i<k_i$, multiplying by $v^{-1}_i \bmod~k_i$, the following holds:
		
		\begin{equation}\label{equu} 
		r_i-r_i" \equiv 0~(\bmod~k_i).
		\end{equation}
		
		As both $r_i, r_i"<k_i$, then $(r_i-r_i")<k_i$. Therefore, the only case when equation \eqref{equu} holds is when $r_i=r_i"$, which is impossible as both $r_i$ and $r_i"$ are truly randomly generated. 
		
	\end{proof}


	Let $N_{rand}=\{N_1", \ldots, N_n"\}$ be a randomly generated set of cardinality $n$, where $n$ is the number of recipients in the receiving group, and let $X"=\prod_{i=1}^{n} N_i"$. Also, let $E_{rand}=\{e_1", \ldots, e_n"\}$ be a randomly generated set of cardinality $n$, for each $i$. On the other hand, let $N_{real}=\{N'_1, \ldots, N'_n\}$ and $E_{real}=\{e'_1, \ldots, e'_n\}$ be two sets generated using $K^+_1, \ldots, K^+_n$, respectively. Moreover, let $M$ be a message vector of size $n$ to be encrypted.

	\texttt{\\}
	
	Now, consider the following two distributions:
	\begin{itemize}
		\item
		Distribution of the set $S_{real}=\{S_1, \ldots, S_n\}$ which is generated using the two real sets $N_{real}$ and $E_{real}$.\\
		\item 
		Distribution of the set $S_{rand}=\{S_1", \ldots, S_n"\}$ which is generated using the two random sets $N_{rand}$ and $E_{rand}$.
	\end{itemize}
	
	\texttt{\\}
	
	The adversary $\mathcal{A}$ solves the decision problem by effectively distinguishing two distributions, $S_{rand}$ and $S_{real}$, in polynomial time. In other words, given a challenge ciphertext $C$ from the encryption oracle, the adversary should know which distribution has been used in the encryption to generate $C$. Furthermore, the encryption oracle randomly selected a distribution $S_b$, where $b \in \{random, real\}$, and $\mathcal{A}$ should guess which distribution was chosen to get $C$.
	
	The advantage of the adversary to link the challenge ciphertext to the correct distribution should be negligible. Moreover, if finding the correct pair $(N'_i, e'_i)$ for all recipients $n$ is hard, then there is no significant change in the behavior of the adversary $\mathcal{A}$ when the distribution $S_{real}$ is replaced by the random distribution $S_{rand}$. Therefore, if we perform this substitution between the two distributions $S_{real}$ and $S_{rand}$, and the advantage of $\mathcal{A}$ is negligible, then the message vector $M$ is completely hidden, which implies security of the cryptosystem $AMOUN$.
	
	\begin{claim} \label{claimDistn}
		Pr[$C = E_{S_{real}}(M)$] = Pr[$C = E_{S_{rand}}(M)$]+$\epsilon$.
	\end{claim}
	\begin{proof} We will start by showing that if the input of the security parameters comes from $S_{real}$, as shown in equation \eqref{eqreal}, the adversary will have a negligible advantage to get information that helps in guessing the hidden bit $b$, given the challenge ciphertext $C$ and $K^+_1, \ldots, K^+_n$.
		
		\begin{dmath} \label{eqreal} 
			C=(m_1\cdot S_1+...+m_i\cdot S_i+...+m_n\cdot S_n)~(\bmod~X).
		\end{dmath}
		
		As mentioned above, it is computationally infeasible for the adversary to get $N'_i$ given $N_i$ and $d_i$, as equation \eqref{E_N} becomes linear function of the form $z=a\cdot x+b\cdot y$; therefore, there are infinitely many ordered pairs $(N'_i, f_i, t_i)$ that satisfy the linear function, as shown in Lemma \ref{PlaneLemma}. Also, even in the case when the adversary computes the proper $N'_i$, it is computationally infeasible for the adversary to get $e'_i$ given $e_i$ and $N'_i$, as equation \eqref{E_e} becomes linear function of the form $y=a\cdot x+b$; therefore, there are infinitely many ordered pairs $(e'_i, r_i)$ that could be the correct solution of the linear function as shown in Lemma \ref{XNiLemma}.

		On the other hand, if the input of the security parameters comes from $S_{rand}$, as shown in equation \eqref{eqrand}, then the adversary will have a negligible advantage to determine hidden bit $b$, given the challenge ciphertext $C$ and $K^+_1, \ldots, K^+_n$.
		
		\begin{dmath} \label{eqrand} 
			C=(m_1\cdot S_1"+...+m_i\cdot S_i"+...+m_n\cdot S_n")~(\bmod~X").
		\end{dmath}
		
		Since $\mathcal{A}$ cannot find a closed form solution for the nonlinear equation \eqref{eqrand} with $3 \cdot n$ unknown variables, then $\mathcal{A}$ will not be able to find $f_i", t_i", r_i"$, for any recipient $i$, given $C$. This implies that the distribution of the bit $b$ is independent from the view of $\mathcal{A}$; therefore, the advantage of $\mathcal{A}$ is negligible.
		
		Based on the above reasoning, from the view of $\mathcal{A}$, the encryption elements of both $S_{real}$ and $S_{rand}$ are random elements. Therefore, $\mathcal{A}$ cannot effectively distinguish two distributions in polynomial time. Thus, the advantage of $\mathcal{A}$ is negligible if $S_{real}$ is replaced by $S_{rand}$.
	\end{proof}

	
	Now, let $M_b \in MsgSp(v)$, $b \in \{0,1\}$, be two message vectors of size $n$ to be encrypted, where $|M_0|=|M_1|$. The cryptosystem is not $IND-CPA$ if the adversary $\mathcal{A}$ effectively distinguishes $M_0$ and $M_1$ in polynomial time, given a challenge ciphertext $C$. After the encryption oracle encrypts one of the message vectors at random, the encryption oracle provides $\mathcal{A}$ with $C$. Then the adversary $\mathcal{A}$ should guess which of the given message vectors relates to $C$.
	
	The advantage of $\mathcal{A}$ to link the challenge ciphertext to the correct message vector should be negligible. Moreover, since the coins $\{r_1,\ldots,r_n\}$ randomize the ciphertext $C$, and computing the security parameters $S_1, ..., S_n$ is hard, then the advantage of $\mathcal{A}$ is negligible, which implies the cryptosystem $AMOUN$ to be $IND-CPA$.

	\begin{claim}
		Pr[$C=E(M_0)$]=Pr[$C=E(M_1)$]$+\epsilon$.
	\end{claim}
	\begin{proof}
		Since the adversary knows the message vectors and the challenge ciphertext $C$, then in the equation \eqref{E28}, the adversary will have $3 \cdot n$ unknown variables $f_i, t_i,r_i, \forall i \in \{1,2,\ldots,n\}$. In the best case, when the number of recipients is at the minimum $n=2$, there are still $6$ unknown variables which cannot be rewritten in terms of the known variables because of nonlinearity. It is impossible to find a closed-form solution having $1$ nonlinear equation on $6$ variables, unless the advantage of a very specific relationship between the unknown variables, where the equation factoring can be completely taken place.
		
		Following the exact same reasoning as in Claim \ref{claimRandom} and Claim \ref{claimDistn}, the coin $r_i, \forall i \in \{1,2,\ldots,n\}$ is randomly generated for each encryption and cannot be rewritten in terms of the known variables. Since $\mathcal{A}$ cannot compute $e_i^{'-1}~(\bmod~N_i)$ because finding all $N'_i$s and $e'_i$s is extremely hard based on Theorem \ref{ThAll}, then $\mathcal{A}$ will not be able to retrieve any message $m_i$. 
		
		Therefore, $\mathcal{A}$ cannot learn any other information from knowing the challenge ciphertext $C$ and $M_0, M_1$, given to the encryption oracle, which implies that the probability that the encryption oracle encrypted, $M_0$, is the same as probability that the encryption oracle encrypted $M_1$, up to a negligible advantage $\epsilon=1/(O(n \cdot f(x,y) \cdot g(z)))$ obtained by Theorem \ref{ThAll}.
		
	\end{proof}
	The above three claims imply the indistinguishability of our proposed cryptosystem.
\end{proof}

$AMOUN$ is not secure against any form of chosen ciphertext attack $(CCA)$. Though no mathematical solution can be done to solve this issue, we propose padding as a possible method of making our scheme $CCA$ secure. As for $CCA$, it has not been solved for RSA without involving padding \cite{mansour2017multi,schneier2007applied,ferguson2011cryptography}; therefore, we cannot see an easy modification to allow for $CCA$ in our scheme as it stands even with modifications to the decryption.


\section{Time Complexity Analysis} \label{sec:TimeComplexity}

This section explains the time complexity of $AMOUN$ and compares it with RSA and Multi-RSA. Specifically, we explore the time complexity of initialization, encryption, and decryption. Then, we compare it with the time complexity of RSA and Multi-RSA, since they have a similar mathematical formulation as $AMOUN$. To calculate time complexity, we make certain assumptions. First, basic arithmetic operations such as addition and subtraction have time complexity $O(\floor*{log(n)}+1)$, where $n$ is the size of the largest decimal operand. Second, multiplication, division, and modulus have time complexity $O((\floor*{log(n)}+1)^2)$, where $n$ is the size of the largest decimal operand. Third, the exponentiation takes $O(\floor*{log(n)} \cdot (\floor*{log(w)}+1)^2)$ when using Binary Method, also known as the square and multiply method \cite{gordon1998survey}, where $n$ is the size of the decimal power and $w$ is the size of the decimal modulus. Fourth, for very large numbers, the extended Euclidean algorithm takes $O(\floor*{log(n)}^2)$, where $n$ is the size of the largest decimal operand \cite{lindell2014introduction}. Table \ref{tbl:Complexity} presents the comparison of time complexity for $AMOUN$, Multi-RSA, and RSA.


\begin{table*}[!t]
	\centering
	\caption{Time complexity analysis of $AMOUN$, Multi-RSA, and RSA}
	\label{tbl:Complexity}
	
	\begin{tabular}{llll}
		\toprule 
		& \textbf{Initialization} & \textbf{Encryption} &\textbf{Decryption for one recipient} \\ \midrule 
		\textbf{AMOUN}         &$O(n \cdot( (\floor*{log(X)}+1)^2 +\floor*{log(X)}^2))$ &$O(n \cdot ((\floor*{log(X)}+1)^2+ (\floor*{log(X)}+1)))$ &$O((\floor*{log(C)}+1)^2)$  \\ 
		\textbf{Multi-RSA}     &$O(n \cdot( (\floor*{log(X)}+1)^2 +\floor*{log(X)}^2))$&$O(n \cdot( \floor*{log(e_i)} \cdot (\floor*{log(N_i)}+1)^2+ (\floor*{log(X)}+1)^2))$&$O( \floor*{log(d_i)} \cdot (\floor*{log(N_i)}+1)^2)$\\ 
		\textbf{RSA}           &NA&$O(n \cdot( \floor*{log(e_i)} \cdot (\floor*{log(N_i)}+1)^2))$&$O( \floor*{log(d_i)} \cdot (\floor*{log(N_i)}+1)^2)$  \\ 
		\bottomrule
	\end{tabular}	
\end{table*}


Based on these assumptions, initialization for $AMOUN$ takes $O(n \cdot (4 \cdot (\floor*{log(X)}+1)^2 +\floor*{log(X)}^2+ \floor*{log(N_i)}+1))$, where $X$ is the largest operand and $n$ is number of recipients. $AMOUN$'s initialization requires $two$ multiplications and $one$ addition to compute $N'_i$, $one$ multiplication of $N_i$ to get $X$, performing the extended Euclidean algorithm to find $A_i$, and $one$ multiplication to get $AX_i$. Now as $X$ grows, the cost of other factors becomes negligible. Therefore, time complexity of $AMOUN$'s initialization is $O(n \cdot( (\floor*{log(X)}+1)^2 +\floor*{log(X)}^2))$. Moreover, Multi-RSA initialization takes $O(n \cdot( (\floor*{log(X)}+1)^2 +\floor*{log(X)}^2))$ since it performs two multiplications and the Extended Euclidean Algorithm for $n$ recipients, where RSA scheme does not have initialization \cite{mansour2017multi}.

Encryption for $AMOUN$ takes $O(n \cdot (4 \cdot (\floor*{log(X)}+1)^2+ 2 \cdot (\floor*{log(X)}+1)))$, where $X$ is the largest operand and $n$ is number of recipients. $AMOUN$'s encryption requires $one$ multiplication and $one$ addition to compute $e'_i$, $one$ multiplication to get $S_i$, $one$ multiplication and $one$ addition to find the sub-ciphertext $c_i$, and $one~\bmod~X$ operation. Now as $X$ grows, the constants become negligible. Therefore, for $AMOUN$, time complexity of encryption is $O(n \cdot ((\floor*{log(X)}+1)^2+ (\floor*{log(X)}+1)))$. On the other hand, Multi-RSA encryption takes $O(n \cdot( \floor*{log(e_i)} \cdot (\floor*{log(N_i)}+1)^2+ (\floor*{log(X)}+1)^2)+(\floor*{log(X)}+1)^2)$, where $e_i$ is the public key, $N_i$ is the modulus, $X$ is the product of all $N_i$s, and $n$ is number of recipients. Multi-RSA encryption requires $one$ exponentiation and $one$ multiplication, for each recipient $i$, and $one~\bmod~X$ operation at the end. Now as $X$ grows, the $\bmod~X$ operation become negligible. Therefore, the time complexity of Multi-RSA encryption is $O(n \cdot( \floor*{log(e_i)} \cdot (\floor*{log(N_i)}+1)^2+ (\floor*{log(X)}+1)^2))$ \cite{mansour2017multi}. RSA's encryption takes $O(\floor*{log(e_i)} \cdot (\floor*{log(N_i)}+1)^2)$ for one recipient, as it performs only $one$ exponentiation operation. Therefore, RSA takes $O(n \cdot( \floor*{log(e_i)} \cdot (\floor*{log(N_i)}+1)^2))$ for $n$ recipients, where $e_i$ is the public key and $N_i$ is the modulus.

Decryption for $AMOUN$ takes $O(3 \cdot (\floor*{log(C)}+1)^2)$ for one recipient, where $C$ is the ciphertext. We take $mod~k_i$, then multiply the result by the random number $y_i$, and we take $mod~v_i$ to get $m_i$. Now as $C$ grows, the constant becomes negligible. Therefore, $AMOUN$ has a time complexity of $O((\floor*{log(C)}+1)^2)$. On the other hand, decryption of both RSA and Multi-RSA takes $O( \floor*{log(d_i)} \cdot (\floor*{log(N_i)}+1)^2)$ for $one$ recipients, where $d_i$ is the private key and $N_i$ is the modulus.

\section{Evaluation}\label{sec:Evaluation}

This section presents the performance evaluation of $AMOUN$ and shows its comparative analysis with respect to RSA and Multi-RSA since both of them rely on the properties of modulus \cite{ding1996chinese} and prime factorization problem \cite{rivest1978method}. The key size was varied from $1024-bits$ to $6144-bits$. All experiments were performed on an $Intel~Core~i7-3517U~CPU$ with $8GB$ memory. $AMOUN$'s core library, RSA, and Multi-RSA were implemented in $Visual~Studio~2017~v15.3$ with $.NET~4.7$ using $C\#~7.1$. We ran each test $1000$ times, and the results were averaged to remove any outliers.

Keys of RSA and Multi-RSA were generated from primes $P_i$ and $Q_i$. The recurrence of prime numbers was avoided during any single test. The size of both $P_i$ and $Q_i$ is $1024-bits$, the size of $e_i$ and $d_i$ is also roughly $1024-bits$, and the size of $N_i$ is close to $2048-bits$. Public key $e_i$ was randomly generated such that it is less than $P_i$ and co-prime with $(P_i-1)\cdot(Q_i-1)$. Private key $d_i$ was found as the modular inverse of $e_i$ with respect to $(P_i-1)\cdot(Q_i-1)$. On the other hand, keys of $AMOUN$ were generated from primes $k_i$, $p_i$, $q_i$, and $v_i$. Size of $N_i$, $e_i$ and $d_i$ is close to $2048-bits$ for primes of size $1024-bits$, $4096-bits$ for primes of size $2048-bits$, and $6144-bits$ for primes of size $3072-bits$, respectively. Lastly, messages were randomly generated such that they were less than every prime we use. 

Performance evaluation of $AMOUN$'s initialization, encryption, and decryption was done compared to the corresponding phases of Multi-RSA. Since RSA does not have an initialization phase, the comparison was done with respect to encryption and decryption only. The first phase is initializing the needed parameters in the encryption phase using public keys $K^+_1, ..., K^+_n$ accordingly for $n$ recipients. The second phase is encrypting $n$ messages, $m_1, ..., m_n$, using the initialized parameters, accordingly. After that, producing ciphertext $C$ for $AMOUN$ and Multi-RSA, where concatenating every sub-ciphertext $c_i$ to come up with ciphertext $C$ for RSA. This gives us a fair comparison to $AMOUN$ because RSA is not a multi-recipient cryptographic scheme. For the third phase, we decrypt ciphertext $C$ using private key $K_i^-$ to recover the original message $m_i$.

\subsection{Initialization: $AMOUN$ and Multi-RSA} \label{sec:Ini_Results}

We compare the total initialization computational cost of $AMOUN$ to that of Multi-RSA, as shown in Fig.~\ref{fig:AMOUNInitial}. Initialization time for both $AMOUN$ and Multi-RSA includes initializing the needed parameters in the encryption phase for a receiving group of size $n$ in order to encrypt the messages using public keys $K^+_1, ..., K^+_n$, accordingly. The size of public keys that $AMOUN$ uses is $2048-bits$, whereas Multi-RSA uses public keys of size $1024-bits$.

From Fig.~\ref{fig:AMOUNInitial}, it is apparent that Multi-RSA has better performance than $AMOUN$, when the number of recipients is up to $10$. Moreover, Multi-RSA shows $3\%$ lower average processing time than $AMOUN$. The reason is that, for $AMOUN$, the parameter $N'_i$ is computed for every recipient in the initialization phase instead of the encryption phase, as shown in equation \eqref{E_N}. This small overhead improves the performance of the encryption phase, where the initialization phase is only done once for the group before the start of the encryption phase. On the other hand, there is no initialization phase for RSA scheme.

\begin{figure}
	\includegraphics[width=\columnwidth]{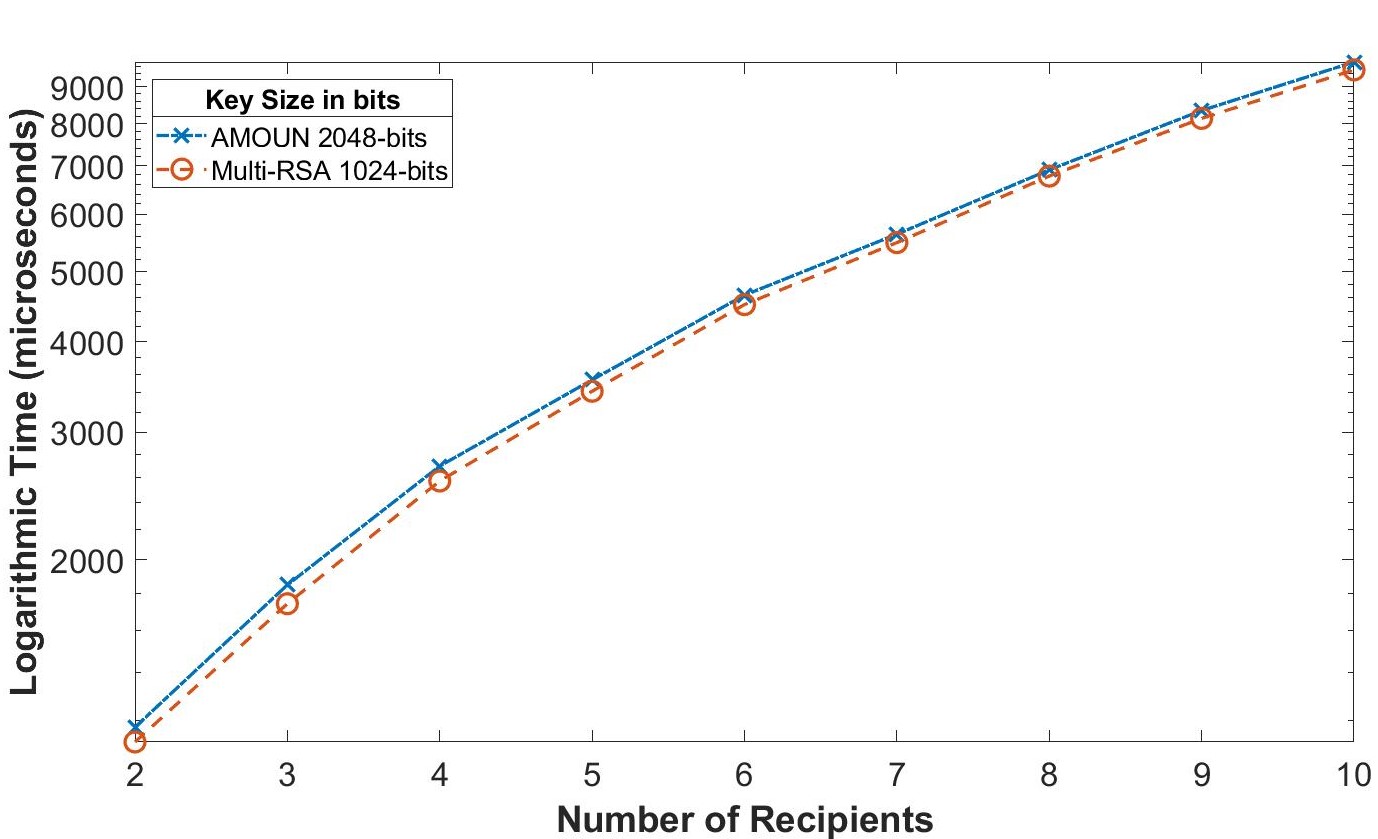}
	\caption{Total initialization time: $AMOUN$ with key size 2048-bits and Multi-RSA with key size 1024-bits}
	\label{fig:AMOUNInitial}
\end{figure}

\subsection{Encryption: $AMOUN$, RSA , and Multi-RSA} 

For encryption phase, we encrypt $n$ randomly generated messages with the same size using $K^+_1, ..., K^+_n$ for both RSA and Multi-RSA, and $S_1, ..., S_n$ for $AMOUN$, accordingly, then compute $C$ for $AMOUN$ and Multi-RSA schemes. On the other hand, to perform a fair comparison with RSA, we concatenate every sub-ciphertext $c_i$ to get $C$ for RSA. This is because RSA is not a native multi-recipient cryptographic scheme.

Fig.~\ref{fig:AMOUNEnc} compares the computational cost of the encryption algorithm of $AMOUN$ with key sizes $2048,~4096,~6144-bits$ to that of RSA and Multi-RSA with key size $1024-bits$. It is clear that, with respect to time required for encryption, $AMOUN$ performs significantly better than both RSA and Multi-RSA, even for larger key sizes when a number of recipients is up to $10$. Moreover, it can be seen that when the number of recipients increases, the performance gain of $AMOUN$, in terms of computational cost for encryption, will be further improved. Hence, $AMOUN$ is particularly more useful for applications having a high number of recipients. Quantitative analysis reveals that $AMOUN$ with $2048-bits$ key size shows $99\%$ lower average computational cost than both RSA and Multi-RSA with $1024-bits$ key size. Also, $AMOUN$ with $4096-bits$ key size shows $98\%$ and $99\%$ lower average computational cost than RSA and Multi-RSA with $1024-bits$ key size, respectively. Moreover, $AMOUN$ with $6144-bits$ key size shows $97\%$ and $98\%$ lower average computational cost than RSA and Multi-RSA with $1024-bits$ key size, respectively.

\begin{figure}
	\includegraphics[width=\columnwidth]{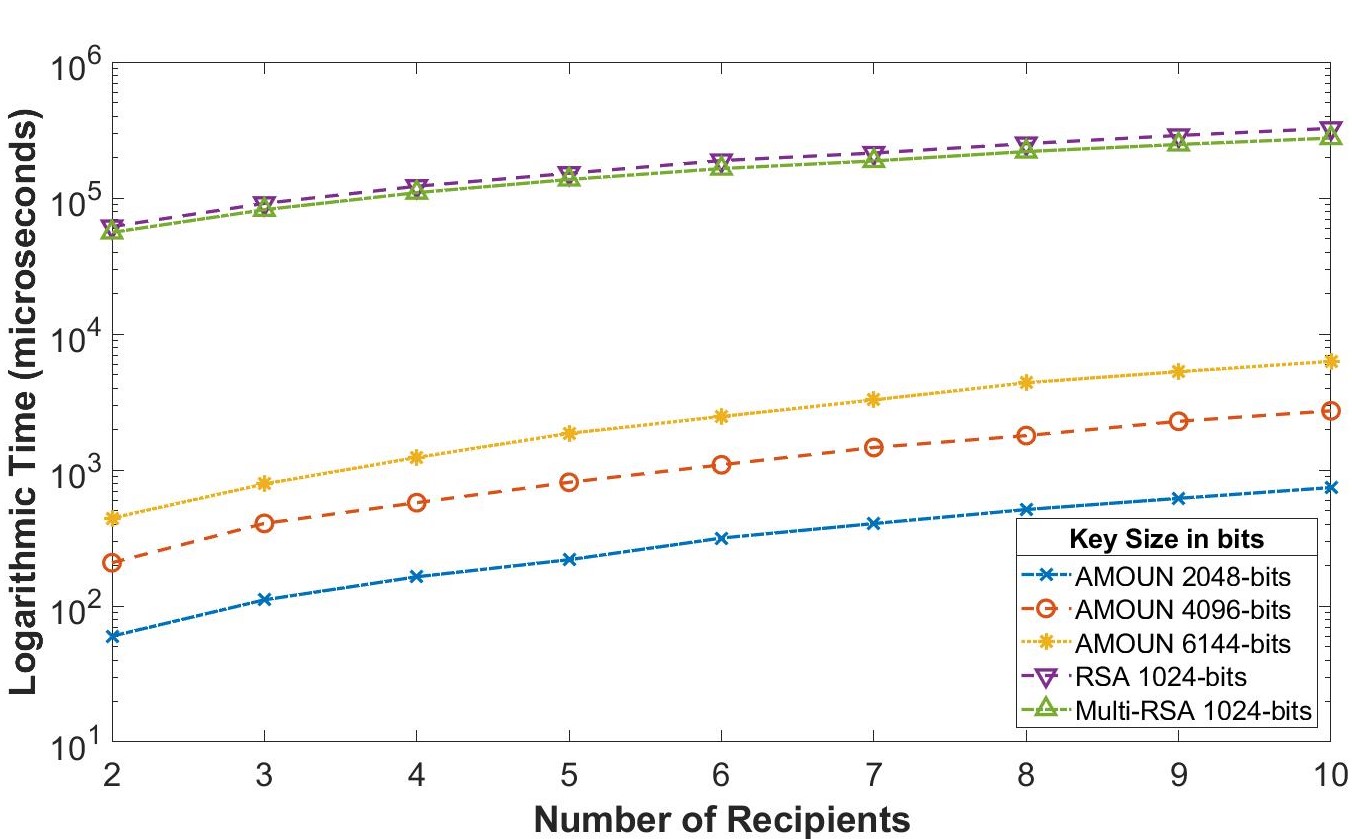}
	\caption{Total encryption time: $AMOUN$ with key sizes 2048, 4096, and 6144-bits, RSA and Multi-RSA with key size 1024-bit}
	\label{fig:AMOUNEnc}
\end{figure}

\subsection{Decryption: $AMOUN$, RSA, and Multi-RSA}

In order to compare $AMOUN$ with RSA and Multi-RSA in terms of time to perform decryption of ciphertexts, we compute the decryption time for every recipient for all schemes. In other words, we compute the time all recipients take to decrypt $C$ using their private keys $K_i^-$s to get all messages $m_i, \cdots, m_n$ for each scheme. Fig.~\ref{fig:AMOUNDecOne} compares the decryption computational cost of $AMOUN$ with key sizes $1024,~2048,~3072-bits$ to that of RSA and Multi-RSA with key size $1024-bits$. Fig.~\ref{fig:AMOUNDecOne} represents decryption time for RSA and Multi-RSA together since both use the same decryption algorithm. Note that the performance of $AMOUN$ is better than both RSA and Multi-RSA, as its decryption time is significantly less, even for larger key sizes when the number of recipients is up to $10$. Furthermore, $AMOUN$ with $1024-bits$ and $2048-bits$ key sizes shows $99\%$ lower average computational cost than RSA and Multi-RSA with $1024-bits$ key size. Also, $AMOUN$ with $3072-bits$ key size shows $98\%$ lower average computational cost than RSA and Multi-RSA with $1024-bits$ key size.

\begin{figure}
	\includegraphics[width=\columnwidth]{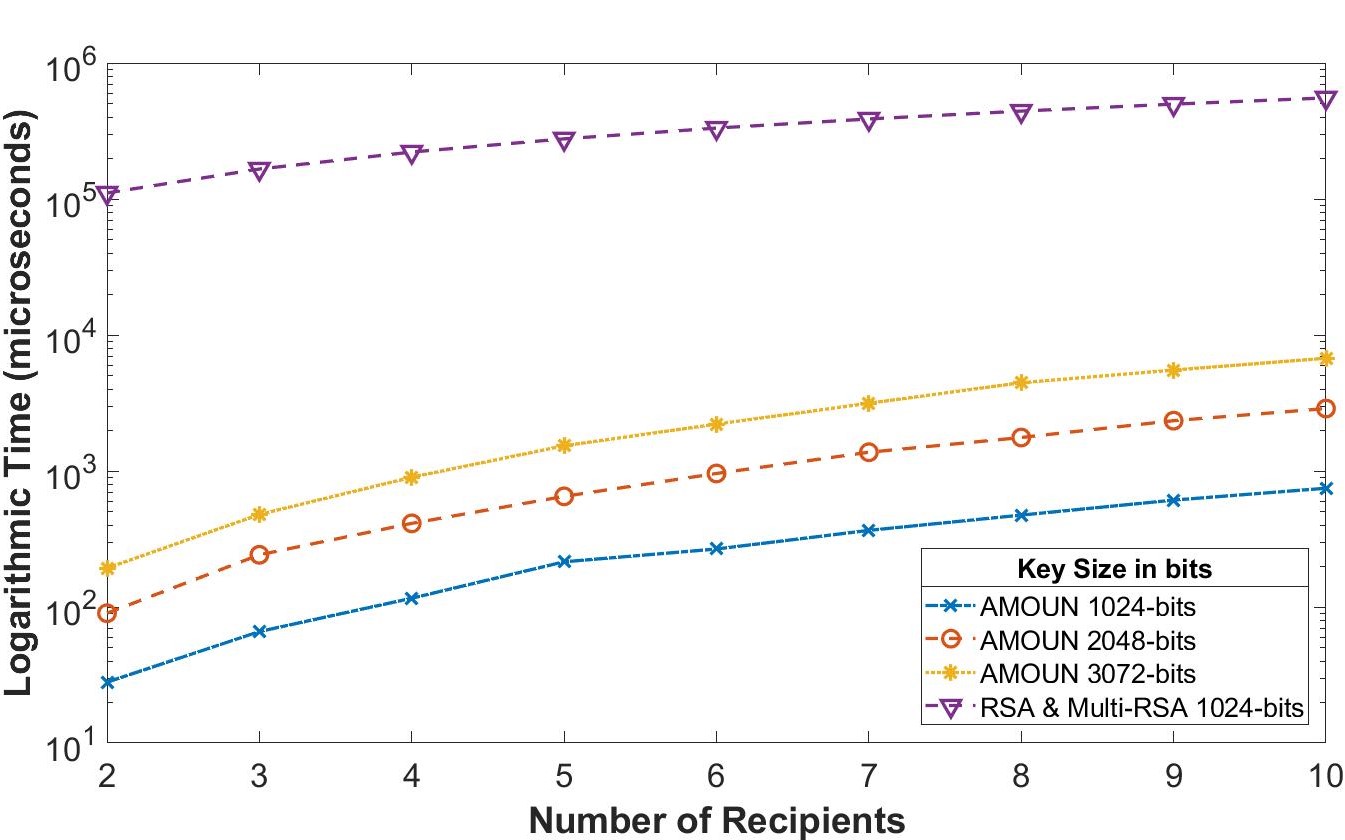}
	\caption{Total decryption time: $AMOUN$ with key sizes 1024, 2048, and 3072-bits, RSA and Multi-RSA with key size 1024-bit}
	\label{fig:AMOUNDecOne}
\end{figure}

\section{Discussion} \label{sec:Discussion}

In this section, we analyze the performance gain obtained by $AMOUN$ in the encryption phase. We perform a comparative analysis of our scheme with existing solutions such as RSA and Multi-RSA. We discuss the advantages of $AMOUN$ in terms of timeliness and scalability. This can be demonstrated by applying our scheme in real-time wireless connected environments having resource constraints such as WSNs and VANETs.

It can be concluded from section \ref{sec:Ini_Results} and Fig.~\ref{fig:AMOUNInitial} that there is a slight overhead of $AMOUN$'s initialization compared to Multi-RSA's. However, this initialization will occur only once per receiving group at the sender side, which is negligible considering the $99\%$ gain in terms of the computational cost of encryption. Fig.~\ref{fig:AMOUNIni+Enc} shows the significant improvement in performance, by decoupling initialization from encryption, and computing $N'_i$ in the initialization phase instead of the encryption phase, as shown in equation \eqref{E_N}. We obtain this performance gain in the encryption phase by reducing its complexity from $O(n \cdot( (\floor*{log(X)}+1)^2 +\floor*{log(X)}^2+(\floor*{log(X)}+1)))$ to $O(n \cdot ((\floor*{log(X)}+1)^2+ (\floor*{log(X)}+1)))$, where $n$ is the number of recipients, and $X$ is the multiplication of all $N_i$s of receiving group, as shown in equation \eqref{E6}. Quantitative analysis reveals that, with $2048-bits$ and $4096-bits$ keys size, the encryption is on average $93$ and $92$ times faster than the coupling of the initialization and encryption, respectively. It is clear from these results and Fig.~\ref{fig:AMOUNIni+Enc}, that we gain more improvement in the performance of encryption when the size of key increases.

\begin{figure}
	\includegraphics[width=\columnwidth]{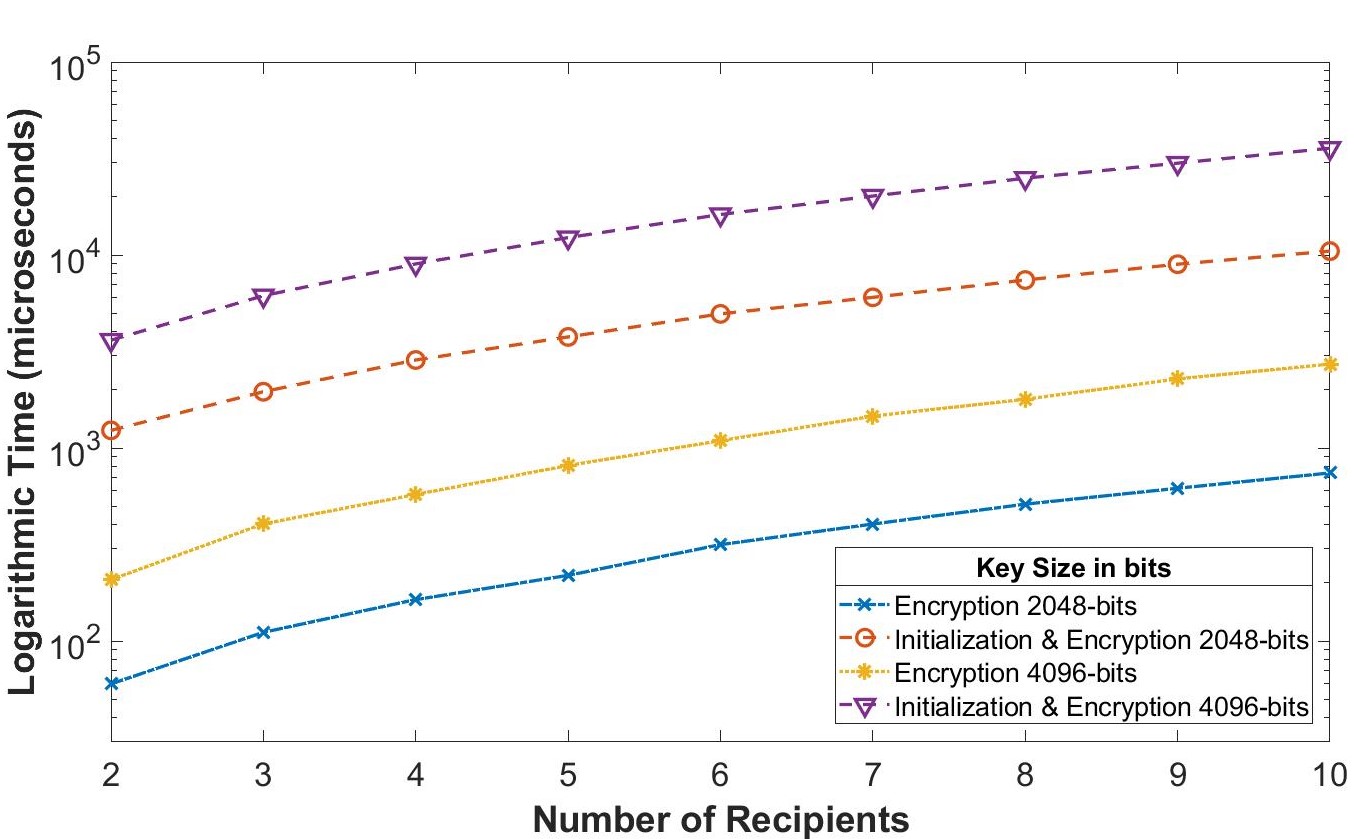}
	\caption{Total encryption time for $AMOUN$ with keys size 2048 and 4096-bits: encryption vs. initialization \& encryption.}
	\label{fig:AMOUNIni+Enc}
\end{figure}

As $AMOUN$ is a pure asymmetric multi-recipient cryptographic scheme, it overcomes the overhead of key distribution. On the other hand, for encryption and decryption, $AMOUN$ introduces a lightweight multi-recipient cryptographic scheme. For encryption and decryption, existing asymmetric solutions introduce huge computational cost compared to what $AMOUN$ exhibits. Note that the lightweight nature of $AMOUN$ does not affect its strength compared to other known multi-recipient asymmetric cryptographic schemes. For example, Fig.~\ref{fig:AMOUNEnc} shows that $AMOUN$ with $2048-bits$ key size is on average $538$ times faster than Multi-RSA with $1024-bits$ key size for encryption when the number of recipients is up to $10$. Moreover, Fig.~\ref{fig:AMOUNDecOne} shows that, for $1024-bits$ key size, $AMOUN$ is on average $1035$ times faster than Multi-RSA for decryption, when the number of recipients is up to $10$.

Note that for real time applications, when the sender has to perform secure multi-recipient communication and is resource deficient (less CPU, memory, and energy), RSA and Multi-RSA would become inapplicable due to their high computational cost. For example, our analysis shows that using an $Intel~Core~i7-3517U~CPU$ with $8GB$ memory, when no other applications were running, $AMOUN$ took $743~microseconds$ compared to Multi-RSA which consumed $325~milliseconds$, and took $747~microseconds$ to $553~milliseconds$, for encryption and decryption, respectively, when the size of the receiving group is $10$. From Fig.~\ref{fig:AMOUNEnc} and \ref{fig:AMOUNDecOne}, it is clear that there is a huge difference in terms of computational cost between $AMOUN$ and Multi-RSA, for encryption and decryption, respectively. In wireless mobile infrastructure-based networks, such as WSNs and VANETS, where the sender (i.e. RSU, AP, and BS) needs to securely communicate with a mobile receiving group, $AMOUN$ will be a preferable choice compared to RSA and Multi-RSA. Even for infrastructure-less based mobile networks, where the sender could be a mobile cluster head, the above conclusion holds true.

In order to quantitatively assess the reduction of communication overhead compared to traditional unicast-based schemes, assume the size of data needed to be sent to each recipient is $\lambda$, maximum allowable size for the packet is $\gamma$, and the number of recipients is $n$. $AMOUN$ introduces communication overhead of factor $\mu/\gamma$, where $\mu=\lambda\cdot n$ is the total encrypted ciphertext size. As $\lambda$ approaches to $\gamma$, which is the worst-case scenario for $AMOUN$, performance improvement over traditional schemes will be close to 1. In all other scenarios, $AMOUN$ will always involve less communication overhead compared to traditional schemes. Note traditional unicast-based schemes always requires at least $n$ number of transmissions. In addition, $AMOUN$ requires zero communication overhead among recipients or between sender and recipients when the recipient group changes. This makes $AMOUN$ ideal for systems, such as WSNs and VANETs, where there is little or no trust among individual recipients.

From Fig.~\ref{fig:AMOUNEnc}, it is apparent that as the size of receiving group increases, the computational cost for existing solutions, like Multi-RSA, grows exponentially which makes these schemes less scalable in wireless distributed network environments. Moreover, quantitative analysis reveals that, $AMOUN$ with $2048-bits$ key size took $743~microseconds$ compared to Multi-RSA which consumed $325~milliseconds$ with $1024-bits$ key size, $AMOUN$ took $14~milliseconds$ compared to Multi-RSA which consumed $1.5~seconds$, and took $127~milliseconds$ to $4.2~seconds$, when the size of the receiving group are $10$, $50$, and $150$, respectively. Based on these results, it is clear that as the size of the receiving group increases, $AMOUN$ is more scalable than Multi-RSA. Furthermore, for resource-constrained environments, the use of $AMOUN$ to secure multi-recipient communications is practical. One of our future works is to validate the applicability of $AMOUN$ in WSNs and VANETs.

\section{Conclusion} \label{sec:Conclusion}

This paper presented $AMOUN$, a novel cryptographic scheme. $AMOUN$ effectively integrates the mathematical formulations of CRT, prime factorization, discrete logarithm, and the use of the noise parameter, to achieve an efficient asymmetric multi-recipient cryptosystem. The proposed scheme overcomes many challenges other multi-recipient cryptographic schemes face, including but not limited to, the possibility of compromising group privacy, the collusion among recipients, and the need for key distribution. Security analysis shows that $AMOUN$ is indistinguishable under adaptive chosen plaintext attack. Moreover, complexity and performance analyses show the effectiveness of $AMOUN$. Empirical results show that $AMOUN$ introduces lower average computational cost, compared to both RSA and Multi-RSA, for both encryption and decryption. The performance of $AMOUN$ increases significantly compared to RSA and Multi-RSA when the size of both keys and multi-recipient group increase.



\section*{Acknowledgment}

This research work was partially supported by the National Science Foundation under Grant CNS-1815724. Any opinions, findings, and conclusions or recommendations expressed in this material are those of the authors and do not necessarily reflect the views of the National Science Foundation.

\bibliographystyle{IEEEtran}
\bibliography{Refe}

\end{document}